\documentclass[11pt]{article}

\usepackage[margin=1in]{geometry}
\usepackage{times}
\usepackage[utf8]{inputenc}
\usepackage[T1]{fontenc}
\usepackage[english]{babel}
\usepackage{natbib}
\setcitestyle{authoryear,round,citesep={;},aysep={,},yysep={;}}
\usepackage{amsmath}
\usepackage{amssymb}
\usepackage{amsfonts}
\usepackage{booktabs}
\usepackage{etoolbox}
\usepackage{microtype}
\usepackage{xcolor}
\usepackage{array}
\usepackage{graphicx}
\usepackage{tikz}
\usetikzlibrary{positioning,fit,calc,backgrounds,shapes.symbols,arrows.meta}
\usepackage[hidelinks]{hyperref}
\usepackage{cleveref}
\usepackage{amsthm}

\theoremstyle{plain}
\newtheorem{theorem}{Theorem}[section]

\theoremstyle{definition}

\theoremstyle{remark}

\newenvironment{promptbox}{\list{}{\leftmargin=1em\rightmargin=0pt}\item\relax\fontsize{8}{9.5}\selectfont}{\endlist}

\graphicspath{{figures/}}

\AtBeginEnvironment{table}{\setlength{\belowcaptionskip}{6pt}}


\title{\texorpdfstring{Learning Checklists to Reduce Lottery Effects\\in Scientific Peer Review}{Learning Checklists to Reduce Lottery Effects in Scientific Peer Review}}

\author{%
  \begin{tabular}[t]{@{}c@{\hspace{3.5em}}c@{}}
    Xuanyou Liu & Ziyang Guo \\
    {\small\texttt{xuanyou@u.northwestern.edu}} & {\small\texttt{ziyang.guo@northwestern.edu}} \\[0.9em]
    Jason Hartline & Jessica Hullman \\
    {\small\texttt{hartline@northwestern.edu}} & {\small\texttt{jhullman@northwestern.edu}} \\[0.9em]
    \multicolumn{2}{c}{Department of Computer Science, Northwestern University}
  \end{tabular}%
}
\date{}

\begin{document}

\maketitle

\begin{abstract}
\noindent
Reviewer lottery error describes error due to a small set of peer reviews attending to only some of the issues with a paper that a larger pool of eligible reviewers would raise. We show this error can be reduced by surfacing a checklist of criticisms learned from a historical review corpus. 
Our approach generates a tree of checklists representing coarsenings of the atomic review claim space, then selects the coarsening that maximizes a theoretically-motivated measure of correlated agreement among reviewers. We realize this approach with an LLM-assisted pipeline and test it on NeurIPS~2021 and ICLR~2022 review data. In simulated review with LLM reviewers, the learned checklist reduces the mean squared error (MSE) of a three-reviewer committee by 58\% on NeurIPS and 21\% on ICLR, and on both venues the checklist with the highest correlated agreement has the lowest error for committees of three or more. In a human study with 20 participants, the learned NeurIPS checklist lowers the MSE of a paper ranking, measured against the NeurIPS~2021 order, from 0.48 to 0.22. 

\end{abstract}

\section{Introduction}
Peer review systems have long played the role of a quality control mechanism for scientific production. However, human peer review is far from being a perfect system, as a result of bias, poor incentives, limited expertise, and other issues~\citep{mann2025ai,shah2022challenges}.
One recognized problem is reviewer lottery variance: a different assignment of qualified reviewers may have resulted in a different outcome. ``Luck of the draw'' effects have been demonstrated across various fields~\citep{cole1981chance,mayo2006peering,beygelzimer2023has,cortes2021inconsistency}. For example, the paired-committee 2021 NeurIPS reviewing experiment found that the two committees disagreed on 23\% of papers, and that roughly half of the accepted papers would have been replaced if review assignments were rerun~\citep{beygelzimer2023has}. Beyond differences in expertise and in how reviewers perceive the paper, variance can arise because any single reviewer covers only a small portion of the evaluation space~\citep{deng2026cocoreviewbench}.

More severely, as LLMs increase scientific production~\citep{kusumegi2025scientific}, higher reviewing loads per human reviewer increase noise, instigating a ``review death spiral'' that further incentivizes authors to try their luck by submitting weak papers~\citep{bergstrom2026screening,liu2026aireview}.

LLMs have been proposed as part of the solution~\citep{liang2024large,xu2025limitations,wei2025aiimperative}. For example, by systematically conducting and surfacing a broad set of checks across papers, AI reviewers can reduce lottery variance by compensating for the limited attention and idiosyncrasies of individual human reviewers. This ensures a common set of baseline checks is applied to each paper, while still allowing the individual reviewers to exercise their judgment and expertise in weighing surfaced issues and identifying other aspects of the paper to assess. However, the risk that must be balanced is that a shared representation evaluating the paper can introduce bias by causing reviewers to underweight their private information~\citep[see, e.g.,][]{bikhchandani1992theory,stasser1985pooling}.

We first characterize the problem of designing an AI checklist as identifying a representation of the review claim space that, when applied to a paper, reproducibly reduces lottery variance without increasing overall error via the addition of shared bias. We formalize checklist design as learning a coarsened representation of the review claim space observed in historical reviews. Because historical peer review datasets do not enable estimating reviewer decisions under checklists that reviewers didn't observe, our approach targets the coarsening that maximizes a proxy objective based in correlated agreement: agreement between independent reviews of the same paper relative to the expected agreement between reviews of different papers.

We next contribute an automated pipeline for learning and implementing checklists using LLM agents. Our approach takes as input a historical review corpus, and decomposes each review into atomic claims, each labeled by the source of information that would resolve it. We require the claims to be settled by the paper or its supplementary material, such that an LLM can verify the learned checklist with high accuracy. The claims are then grouped across papers into a hierarchy of named checklist items, and our proxy objective selects the granularity at which that hierarchy is applied.

We evaluate whether checklists learned using our correlated agreement objective reduce finite-reviewer error on existing review datasets. In a simulated review experiment with LLMs on 30 papers each from NeurIPS~2021 and ICLR~2022, we find that for a typical committee of three reviewers, the learned checklist reduces the review error by 58\% on NeurIPS and 21\% on ICLR. On NeurIPS 2021, it also has lower error than the official NeurIPS~2021 paper checklist at every committee size. On both datasets, the checklist with the highest correlated agreement also has the lowest error for committees of three or more reviewers, so our proxy objective identifies the checklist that best balances variance reduction and bias addition.
\section{Related Work}
Human peer review has played the role of an imperfect quality control mechanism for scientific production for over a century
~\citep{mann2025ai,shah2022challenges}. Reviewer lottery effects are demonstrated in grant funding by replicating review processes and finding that outcomes depend on the draw~\citep{cole1981chance,mayo2006peering}. In machine learning, 
a reanalysis of the 2014 duplicated assignment NeurIPS experiment attributes half of the variation in reviewer quality scores to reviewer subjectivity~\citep{cortes2021inconsistency}. The 2021 NeurIPS experiment finds substantial variation in accepted papers across two non-communicating committees who reviewed the same 882 papers~\citep{beygelzimer2023has}. 
Computational interventions proposed to address lottery issues have tended to target assignment, calibration, and dishonest behavior rather than what reviewers attend to~\citep{shah2022challenges}.

AI reviewers bring opportunities and risks. LLM-generated comments can overlap human reviews at rates comparable to a second human reviewer~\citep{liang2024large}, and models can perform well on targeted, verifiable questions
~\citep{liu2023reviewergpt}. 
State-of-the-art AI review systems increasingly rely on agentic, multi-stage pipelines that 
ground their comments in quoted passages or other verifiable evidence to improve interpretability and accuracy~\citep{gao2025reviewagents,nguyen2026benchmarking,weng2026deepreviewer,fang2026proreviewer,ma2026actreview,openaireview,reviewer3,jiang2025paperreview}. Given the infeasibility of large scale human peer review studies, such tools are often tested via recall of implanted defects; others propose LLM simulations as a testbed for peer review mechanisms that can recover aggregate dynamics of human peer review feedback and outcomes~\citep{jin2024agentreview}.
Venues have also begun to experiment with AI in review, including LLM feedback on reviewers' written reviews~\citep{thakkar2025llm}, LLM reviews returned to authors~\citep{emnlp2026aireview}, and LLM review assistants in OpenReview~\citep{neurips2026aireview}. 

However, AI reviewers can be gameable~\citep{baumann2026stop}, can hallucinate~\citep{deng2026cocoreviewbench}, and attend unevenly across review dimensions, over-weighting technical validity and under-weighting novelty~\citep{shin2025mind}. 
Applying consistent summary representations also contributes bias, by attempting to reduce scientific quality to a finite set of proxies that can cause reviewers to undervalue their private knowledge~\citep{bikhchandani1992theory,stasser1985pooling}.
\section{Problem: Learning a Checklist to Reduce Lottery Variance}
\label{sec:problem}

We present a model of peer review and the problem of reviewer lottery variance. We formulate the task of learning a checklist to reduce variance from a historical peer review corpus. 

\subsection{Reviewer Lottery Variance}

One way to frame the goal of peer review is to uncover the latent true epistemic quality of a paper. However, even if a ground truth paper quality state could be said to exist, it would be unobservable.

Thus, a more realistic formalization of the goal of peer review is to recover the expected assessment of the paper taken over the eligible reviewer population. The problem is that we can typically only observe the assessments of a small subset of the ideal group of reviewers.  

We define the population review target for a given paper $i$ as the mean assessment under the distribution of eligible reviewers for that paper:
\[
\mu_{i} = \mathbb{E}_{r\sim P_i} [Y_{ir}],
\]
where $r \sim P_i$ represents a random reviewer draw from the population of eligible reviewers for paper $i$ and $Y_{ir}$ is the judgment of an unaided reviewer $r$ on paper $i$. While in practice $Y_{ir}$ could take various forms, including multi-dimensional assessments like rubric-style evaluations, for the purposes of formalizing the problem we assume that $Y_{ir}\in\mathbb{R}$ is a scalar representing a final paper score. A peer review system that targets $m$ reviewers per paper estimates $\mu_{i}$ as:
\[
\mathbf{R}_{im} = (r_1, \ldots, r_m) \sim P_{i}^{(m)}; \qquad \bar{Y}_{im} = \frac{1}{m} \sum_{j=1}^{m}Y_{ir_{j}};
\]
where $P_i^{(m)}$ is the reviewer assignment mechanism. While $\mathbb{E}[\bar{Y}_{im}]=\mu_i$, any downstream decisions that rely on aggregated reviewer assessments are subject to sampling error, where the small values of $m$ that are typical in human peer review can yield estimates far from $\mu_{i}$.

Lottery variance may arise as a result of each reviewer's expertise being  specialized relative to the range of expertise across eligible reviewers, or reviewers having limited time or cognitive resources to spend surfacing issues, or from some reviewers making errors in interpreting aspects of the paper.

\paragraph{Estimating Reviewer Variance and Incorrect Criticism on the NeurIPS 2021 Experiment}
\label{sec:neurips}
To motivate our characterization of errors in human peer review, we contribute a novel claim-level analysis of the NeurIPS 2021 dual-committee experiment. We use the overlap between issues across the two independent committees for each paper, after merging common issues, to estimate impacts of lottery variance at the claim level. Full details are in \Cref{app:error}.    

We found that independent reviewers rarely mention the same issue, and that many concrete criticisms (those that can be checked directly against the paper) are refuted by the paper materials. Specifically:

\begin{itemize}\setlength{\itemsep}{1pt}
    \item \textbf{Low coverage}: One reviewer raises about 6\% of the distinct issues that further reviewers drawn from the same pool would eventually raise (6.4\%, 95\% CI $[5.9, 6.9]$, Chao2 richness~\citep{chao1987estimating}).
    \item \textbf{Low overlap}: Only one third of reviewer claims are also raised by the other committee (34.3\%, 95\% CI $[32.8, 35.8]$), and 81\% of issues are raised by a single reviewer (\Cref{app:error}).
    \item \textbf{False criticisms}: The paper itself clearly refutes 18.2\% of all judged concrete claims (95\% CI $[16.3, 20.2]$).\footnote{We judge this with an LLM (\Cref{sec:verify}). The check uses the camera-ready PDF, which works in the other direction: material added after review can refute a criticism that held for the submission.}
\end{itemize}

These results suggest that one reviewer raises only a small share of the issues this reviewer pool would eventually surface, and that humans make mistakes in a non-trivial proportion of review claims. 


\subsection{Reducing Variance By Surfacing a Common LLM-Completed Checklist}

\begin{figure}[t]
\centering
\resizebox{\textwidth}{!}{
\input{figures/figstyle}%
\tikzset{
  arr/.style={fig arrow},
  paneltitle/.style={fig title, anchor=north west},
  note/.style={fig note, anchor=north west},
  hdr/.style={fig head, anchor=north west},
  box/.style={inner sep=0, anchor=north west},
  qrow/.style={fig body, anchor=north west},
}%
\begin{tikzpicture}[x=1cm, y=1cm, font=\fontsize{7}{8}\selectfont]

\node[box, fig panel, minimum width=4.55cm, minimum height=3.95cm] (panelA) at (0.00,0) {};
\node[paneltitle] at (0.15,-0.18) {(a) Without a checklist};
\node[note] at (0.15,-0.55) {Examples of eligible reviewers.};

\node[box, fig card=stA, fill=stAl, minimum width=4.25cm, minimum height=0.46cm] (r1) at (0.15,-0.82) {};
\node[hdr, stA] at ($(r1.north west)+(0.10,-0.13)$) {R1};
\node[qrow] at ($(r1.north west)+(0.45,-0.13)$) {``DISCERN missing''\ \textcolor{stX}{(false)}};
\node[hdr, stA, anchor=north east] at ($(r1.north east)+(-0.10,-0.13)$) {score 4};

\node[box, fig card=stB, fill=stBl, minimum width=4.25cm, minimum height=0.46cm] (r2) at (0.15,-1.34) {};
\node[hdr, stB] at ($(r2.north west)+(0.10,-0.13)$) {R2};
\node[qrow] at ($(r2.north west)+(0.45,-0.13)$) {``No fusion ablation''};
\node[hdr, stB, anchor=north east] at ($(r2.north east)+(-0.10,-0.13)$) {score 6};

\node[box, fig card=stC, fill=stCl, minimum width=4.25cm, minimum height=0.46cm] (r3) at (0.15,-1.86) {};
\node[hdr, stC] at ($(r3.north west)+(0.10,-0.13)$) {R3};
\node[qrow] at ($(r3.north west)+(0.45,-0.13)$) {``Gap in Theorem 2''};
\node[hdr, stC, anchor=north east] at ($(r3.north east)+(-0.10,-0.13)$) {score 8};

\node[font=\large\bfseries, text=black!75, anchor=center, inner sep=0] at (2.28,-2.52) {\textbf{. . .}};
\node[fig body, stB, anchor=south] at (2.28,-3.07) {reviewer lottery variance};
\draw[fig span, stB] (0.95,-3.15) -- (3.60,-3.15);
\draw[line width=0.7pt, black!70] (0.35,-3.45) -- (4.20,-3.45);
\fill[stA] (0.95,-3.45) circle (2.0pt);
\fill[stB] (2.28,-3.45) circle (2.0pt);
\fill[stC] (3.60,-3.45) circle (2.0pt);
\fill[black] (2.72,-3.45) circle (1.7pt);
\node[fig body, anchor=north] at (2.72,-3.62) {target $\mu_i(\varnothing)$};

\node[box, fig panel, minimum width=5.00cm, minimum height=3.95cm] (panelB) at (4.65,0) {};
\node[paneltitle] at (4.80,-0.18) {(b) Checklist tradeoff};
\node[note] at (4.80,-0.55) {Lower lottery variance, some bias.};

\node[box, fig card=stC, fill=stCl, minimum width=2.35cm, minimum height=1.30cm] (included) at (4.77,-0.82) {};
\node[hdr, stC] at ($(included.north west)+(0.10,-0.14)$) {Included in $g$};
\node[qrow] at ($(included.north west)+(0.10,-0.46)$) {\textcolor{stC}{\checkmark}\ DISCERN used?};
\node[qrow] at ($(included.north west)+(0.10,-0.77)$) {\textcolor{stC}{\checkmark}\ Fusion ablated?};
\node[qrow] at ($(included.north west)+(0.10,-1.08)$) {\textcolor{stC}{\checkmark}\ Split reported?};

\node[box, fig card=stX, fill=stXl, minimum width=2.35cm, minimum height=1.30cm] (overlooked) at (7.18,-0.82) {};
\node[hdr, stX] at ($(overlooked.north west)+(0.10,-0.14)$) {Overlooked};
\node[qrow] at ($(overlooked.north west)+(0.10,-0.46)$) {\textcolor{stX}{$\times$}\ Novel idea?};
\node[qrow] at ($(overlooked.north west)+(0.10,-0.77)$) {\textcolor{stX}{$\times$}\ Insightful proof?};
\node[qrow] at ($(overlooked.north west)+(0.10,-1.08)$) {\textcolor{stX}{$\times$}\ Field impact?};

\node[fig body, stB, anchor=south] at (7.03,-2.55) {overlooked issues can introduce bias};
\draw[fig span, stB] (5.85,-2.63) -- (8.20,-2.63);
\node[fig body, stC, anchor=south] at (8.20,-3.07) {lower lottery variance};
\draw[fig span, stC] (7.90,-3.15) -- (8.50,-3.15);
\draw[line width=0.7pt, black!70] (4.87,-3.45) -- (9.45,-3.45);
\fill[black] (5.85,-3.45) circle (1.7pt);
\node[fig body, anchor=north] at (5.85,-3.62) {target $\mu_i(\varnothing)$};
\fill[stC] (7.95,-3.45) circle (2.0pt);
\fill[stC] (8.20,-3.45) circle (2.0pt);
\fill[stC] (8.45,-3.45) circle (2.0pt);
\node[fig body, anchor=north] at (8.20,-3.62) {checklist mean $\mu_i(g)$};

\node[box, fig panel, minimum width=4.15cm, minimum height=3.95cm] (panelC) at (9.75,0) {};
\node[paneltitle] at (9.90,-0.18) {(c) Choosing a coarsening};
\node[note] at (9.90,-0.55) {Lower-left points have lower MSE.};

\draw[arr] (10.25,-2.85) -- (13.70,-2.85);
\draw[arr] (10.25,-2.85) -- (10.25,-0.95);
\node[fig body, anchor=north] at (11.97,-2.95) {squared bias};
\node[fig body, rotate=90, anchor=south] at (10.12,-1.90) {lottery variance};

\fill[stA] (10.73,-1.29) circle (2.2pt);
\node[fig body, stA, anchor=south] at (10.73,-1.17) {$g_1$};
\fill[stB] (12.88,-2.12) circle (2.2pt);
\node[fig body, stB, anchor=south] at (12.88,-2.00) {$g_2$};
\fill[black!60] (12.26,-1.39) circle (2.2pt);
\node[fig body, text=black!70, anchor=south] at (12.26,-1.27) {$g_3$};
\fill[black!60] (11.55,-1.86) circle (2.2pt);
\node[fig body, text=black!70, anchor=south] at (11.55,-1.74) {$g_4$};
\fill[stC] (10.88,-2.48) circle (2.6pt);
\draw[stC, line width=0.7pt] (10.88,-2.48) circle (4.8pt);
\node[fig head, stC, anchor=west] at (11.10,-2.48) {$g_5$: lower MSE};

\node[fig head, anchor=north] at (11.82,-3.45) {\textcolor{stB}{higher ECA} \ $\sim$\ \textcolor{stC}{lower MSE}};
\end{tikzpicture}%
}
\caption{The problem and our selection strategy. (a)~The ellipsis denotes additional reviewers from the eligible-reviewer population. Their differing scores create reviewer lottery variance around $\mu_i(\varnothing)$, the unaided population review target. (b)~A shared checklist aligns attention on represented dimensions, lowering lottery variance, but omitted dimensions can shift the checklist-assisted population mean $\mu_i(g)$ and introduce bias. (c)~Candidate coarsenings trade lottery variance against squared bias. Because historical reviews do not reveal counterfactual MSE, we select using expected correlated agreement (ECA), $A_{\mathrm{same}}(g)-A_{\mathrm{diff}}(g)$, then evaluate variance and bias separately.}
\label{fig:problem}
\end{figure}

We seek to reduce reviewer lottery variance by introducing a shared checklist $g$. 
A checklist consists of a set of binary claims that can be resolved against the paper materials with high accuracy and surfaced for assigned reviewers. In doing so, it stands to reduce variance arising from sources like reviewer myopia or careless mistakes.
Introducing a checklist as intervention results in $Y_{ir}(g)$ at the individual reviewer level, with checklist-assisted population mean $\mu_{i}(g) = \mathbb{E}_{r\sim P_i}[Y_{ir}(g)]$ (where the no checklist target can similarly be written as $\mu_{i}(\varnothing) = \mathbb{E}_{r\sim P_i}[Y_{ir}(\varnothing)]$). 
Let the mean score of $m$ reviewers ($\mathbf{R}_{im}$) be
\[
\bar{Y}_{im}(g) = \frac{1}{m} \sum_{j=1}^{m}Y_{ir_{j}}(g).
\]

Using
\[
\mathrm{MSE}_i(g, m) = \mathbb{E}_{\mathbf{R}_{im}\overset{\text{i.i.d.}}{\sim}P_i}[(\bar{Y}_{im}(g)-\mu_{i}(\varnothing))^2]
\]
for the error of checklist-assisted review, and $\mathrm{MSE}_i(\varnothing, m)$ for the corresponding quantity under unaided review, a good checklist satisfies:
\[
\mathrm{MSE}_i(g, m) < \mathrm{MSE}_i(\varnothing, m).
\]

That is, the total error induced by the checklist will be less than that observed under no checklist. Intuitively, a checklist can reduce variance insofar as variation across reviewer scores results from variation in which issues reviewers notice or the possibility of careless mistakes checking a verifiable issue against the paper. 
However, by exposing all reviewers to a shared representation, introducing a checklist may also systematically shift their assessments away from the unaided population review target by biasing what information reviewers pay attention to or how they weight it.

By the standard bias-variance decomposition, the mean squared error of a checklist-assisted reviewer assignment can be written as:
\begin{equation}
\mathrm{MSE}_i(g, m) = \mathrm{Var}
(\bar{Y}_{im}(g)\mid i) + (\mu_{i}(g)-\mu_i(\varnothing))^2.
\label{eq:bias_variance}
\end{equation}
We want to minimize $\mathrm{MSE}_i(g,m)$. Because error in unaided review amounts to variance only:
\[
\mathbb{E}
[\bar{Y}_{im}(\varnothing)\mid i] = \mu_{i}(\varnothing) \quad \Rightarrow \quad \mathrm{MSE}_i(\varnothing,m) = \mathrm{Var}
(\bar{Y}_{im}(\varnothing)\mid i),
\]
introducing a checklist improves finite-sample estimation only when
\begin{equation}
\mathrm{Var}
(\bar{Y}_{im}(\varnothing)\mid i) - \mathrm{Var}
(\bar{Y}_{im}(g)\mid i) > (\mu_{i}(g)-\mu_i(\varnothing))^2;
\label{eq:inequality}
\end{equation}
that is, where the reduction in variance from using the checklist is larger than the added squared bias.
\Cref{fig:problem} sketches the lottery, the intervention, and why checklist design requires choosing a coarsening that reduces variance without introducing too much bias.

\subsection{Checklist Selection}
\label{sec:coarsening}
Existing review corpora provide a rich source for learning what kinds of concerns human reviewers perceive as important. 
But how should we identify the set of relevant checkable issues from which a checklist can draw? Simply distilling the space of atomic review claims from peer review corpora into checklist items would face multiple problems. First, issues learned directly from human reviews may be value judgments rather than properties verifiable against the paper or supplement, which is where LLM checks are most reliable. We therefore want to restrict to those issues that can be directly resolved by the paper materials. Second, if atomic review claims are framed too specifically or idiosyncratically, they may not capture reproducible signal in human review, and add bias if instituted across reviewers. Such a checklist could also be prohibitively large for human reviewers to process, so that only those that apply to a given paper are surfaced. 

If human reviews contain real signal about how the community views a paper, but any individual claim may be framed too idiosyncratically to recur across reviewers as more are added, then checklist design becomes a problem of granularity: a space of issues that is too low level may yield little recurrence as reviewers are added, while one that is too high level may yield recurrence simply because the issues are framed so broadly to apply to any papers. We therefore treat checklist design as finding a coarsened representation of the observed claim space, restricted to verifiable issues, on which reviewers of the same paper agree more than reviewers of different papers do.

Let $C_{ir} \subseteq \mathcal{C}$ be the set of atomic review claims extracted from reviewer $r$'s review of paper $i$, where $\mathcal{C}$ is the claim space observed across a historical review corpus. Let $\mathcal{C}_{\mathrm{ver}} \subseteq \mathcal{C}$ be the claims in this space that are verifiable against the paper or its supplementary material. A checklist is a map $g: \mathcal{C}_{\mathrm{ver}} \to \mathcal{S}_g$, where $\mathcal S_g$ is a partition of $\mathcal C_{\mathrm{ver}}$ into nonempty, disjoint claim clusters. 
Each signal $s\in\mathcal S_g$ is thus a subset of claims, and $g(c)=s$ means that claim $c$ belongs to signal $s$. 
Let $\mathcal G$ denote a finite family of candidate coarsenings $g$.

Ideally, we would select a checklist that minimizes MSE averaged over the population of papers.
Let $Q$ denote this distribution of papers, such that the population objective is $\mathbb E_{i\sim Q}[\mathrm{MSE}_i(g,m)]$.
However, doing so would require being able to observe $Y_{ir}(g)$ (reviewer $r$'s assessment of paper $i$ under checklist $g$) for each candidate checklist $g$. Existing review datasets provide only the assessment under the issues the reviewer noticed ($Y_{ir}(\varnothing)$). We therefore instead use a measure of expected correlated agreement defined on the review claim space to select the coarsening of $\mathcal{C}_{\mathrm{ver}}$ according to a property conducive to variance reduction: 
issues raised by independent reviewers of the same paper recur much more often than those raised by reviewers of different papers.

Given a checklist $g$, each review can be represented as a binary vector $Z_{ir}^{g} \in \{0,1\}^{\lvert \mathcal{S}_g \rvert}$, standing for the activation of the checklist signal by claims in review:
\begin{equation}
Z_{ir}^{g}(s) = \mathbf{1}\!\left[\,C_{ir} \cap s \neq \varnothing\,\right], \qquad s \in \mathcal{S}_g,
\label{eq:zdef}
\end{equation}
where $\mathbf{1}\!\left[\cdot\right]$ is the indicator function.
A signal is raised if the reviewer made at least one of the claims in its cluster. 
As claims are merged into fewer checklist signals, coarsening increases the chance that two reviews activate the same signal, even for unrelated papers.

Let $A(C,Z')=\sum_{c\in C\cap\mathcal C_{\mathrm{ver}}}Z'(g(c))$ count the verifiable claims of a review $C$ whose signal another review $Z'$ also raises, and let $\bar n$ be the expected number of verifiable claims per review. We define
\begin{align}
A_{\mathrm{same}}(g)
&=\bar n^{-1}\,\mathbb E_{i\sim Q}\!\left[
 \mathbb E_{r,r'\overset{\mathrm{i.i.d.}}{\sim}P_i}
 [A(C_{ir},Z_{ir'}^g)\mid i]\right],
\label{eq:asame}\\
A_{\mathrm{diff}}(g)
&=\bar n^{-1}\,\mathbb E_{i,j\overset{\mathrm{i.i.d.}}{\sim}Q}\!\left[
 \mathbb E_{r\sim P_i,\,r'\sim P_j}
 [A(C_{ir},Z_{jr'}^g)\mid i,j]\right].
\label{eq:adiff}
\end{align}
The two reviews share a paper in \eqref{eq:asame} and come from independent papers in \eqref{eq:adiff}, so both are shares of review claims whose signal the other review also raises. 
Denoting the \textbf{expected correlated agreement (ECA)} $\Delta(g)=A_{\mathrm{same}}(g)-A_{\mathrm{diff}}(g)$, we select
\begin{equation}
g^*\in\operatorname*{arg\,max}_{g\in\mathcal G}\Delta(g).
\label{eq:gstar}
\end{equation}
The objective rewards reproducible overlap within papers after subtracting overlap across papers.


\paragraph{Theoretical guarantee.}
We provide a result showing when our checklist is an approximation of the optimal checklist.
For a fixed number of reviewers $m$, let $R_m(g)=\mathbb E_{i\sim Q}[\mathrm{MSE}_i(g,m)]$ be population MSE and $B(\mathcal S_g)=R_m(\varnothing)-R_m(g)$ be the MSE reduction from checklist $g$.
\begin{theorem}[Informal]
\label{thm:agreement_mse}
Suppose MSE reduction is monotone and submodular in the set of surfaced signals, signals that contribute positively to ECA have positive stand-alone MSE benefit, and the signals of $g^*$ are not too redundant with one another. Then the maximizer $g^*$ in \eqref{eq:gstar} approximates the optimal checklist: $B(\mathcal S_{g^*})\ge\alpha\max_{g\in\mathcal G}B(\mathcal S_g)$ with $\alpha=(1-\kappa)/\rho$. Here $\rho\ge1$ is the largest ratio between two signals' stand-alone MSE benefits per unit of ECA, and $\kappa\in[0,1)$ bounds the average fraction of a signal's stand-alone benefit that is redundant given the other signals of $g^*$.
\end{theorem}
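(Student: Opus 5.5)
We work at the level of signals and make the informal hypotheses precise as follows. The set function $B$ is defined on sets of surfaced signals and is monotone and submodular. The ECA decomposes additively over the signals of a coarsening: by \eqref{eq:zdef}, \eqref{eq:asame}, \eqref{eq:adiff} and linearity of expectation, $A(C,Z')=\sum_{s\in\mathcal S_g}|C\cap s|\,Z'(s)$, so
\[
\Delta(g)=\sum_{s\in\mathcal S_g}\delta(s),\qquad
\delta(s)=\bar n^{-1}\Bigl(\mathbb E_{\mathrm{same}}\bigl[|C_{ir}\cap s|\,Z^g_{ir'}(s)\bigr]-\mathbb E_{\mathrm{diff}}\bigl[|C_{ir}\cap s|\,Z^g_{jr'}(s)\bigr]\Bigr).
\]
Because $Z^g(s)$ depends only on the cluster $s$, each $\delta(s)$ depends only on $s$ and not on the rest of the partition. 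Write $b(s)=B(\{s\})$ for the stand-alone benefit. The two quantitative hypotheses then read as follows. First, whenever $\delta(s)>0$ we have $b(s)>0$, and the ratios $b(s)/\delta(s)$ over all signals in $\bigcup_{g\in\mathcal G}\mathcal S_g$ with $\delta(s)>0$ lie in an interval $[\beta,\rho\beta]$ for some $\beta>0$. Second, the redundancy of $g^*$ is controlled by $B(\mathcal S_{g^*})\ge(1-\kappa)\sum_{s\in\mathcal S_{g^*}}b(s)$. One natural reading of ``average fraction redundant'' makes this hold with equality by definition. A signal with $\delta(s)\le0$ contributes nothing positive; we either assume none is present or argue that dropping it from $g^*$ does not change the comparison below.

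\textbf{Upper bound on the optimum.} Let $\hat g\in\arg\max_{g\in\mathcal G} B(\mathcal S_g)$. Submodularity together with $B(\varnothing)=0$ gives subadditivity, so $B(\mathcal S_{\hat g})\le\sum_{s\in\mathcal S_{\hat g}}b(s)$. Signals with $b(s)\le0$ can be dropped from this sum, since dropping only increases it. For a signal with $\delta(s)>0$ the ratio condition gives $b(s)\le\rho\beta\,\delta(s)$. Hence
\[
B(\mathcal S_{\hat g})\le\rho\beta\,\Delta(\hat g)\le\rho\beta\,\Delta(g^*),
\]
where the last step uses optimality of $g^*$ in \eqref{eq:gstar}.

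\textbf{Main obstacle.} Controlling the signals of $\hat g$ with $b(s)>0$ but $\delta(s)\le0$ is the main difficulty. The hypothesis as stated only rules out the reverse implication. We will strengthen it to a biconditional: $b(s)>0$ if and only if $\delta(s)>0$, with the ratio bound holding on this set. Equivalently, we assume that the benefit per unit ECA is bounded above by $\rho\beta$ for all beneficial signals. This is the step where the informal statement must be sharpened.

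\textbf{Lower bound on $g^*$ and conclusion.} By the redundancy bound and then the ratio bound,
\[
B(\mathcal S_{g^*})\ge(1-\kappa)\sum_{s\in\mathcal S_{g^*}}b(s)\ge(1-\kappa)\,\beta\sum_{s\in\mathcal S_{g^*}}\delta(s)=(1-\kappa)\,\beta\,\Delta(g^*).
\]
Here we use that every signal of $g^*$ has $\delta(s)>0$. If not, we replace $g^*$ by its positive-$\delta$ subset. That subset has at least as large a $\Delta$, and by monotonicity its benefit is at most that of $g^*$, so the bound still transfers. Combining this with the upper bound on the optimum gives
\[
B(\mathcal S_{g^*})\ge\frac{1-\kappa}{\rho}\,B(\mathcal S_{\hat g}),
\]
which is the claimed guarantee with $\alpha=(1-\kappa)/\rho$.
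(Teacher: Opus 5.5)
Your skeleton is the paper's own: per-signal decomposition of $\Delta$, subadditivity to bound the optimum from above, the redundancy bound to bound $g^*$ from below, and the ratio band to pass between $\sum_s b(s)$ and $\Delta$. Your direct form $B(\mathcal S_{g^*})\ge(1-\kappa)\sum_s b(s)$ is in fact slightly weaker than the paper's leave-one-out condition A4, which the paper telescopes into exactly this inequality. The gap is in the upper bound on the optimum, and your proposed sharpening does not close it.

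The step $B(\mathcal S_{\hat g})\le\rho\beta\,\Delta(\hat g)$ requires $\sum_{s\in\mathcal S_{\hat g}:\,b(s)>0}\delta(s)\le\Delta(\hat g)$. That is, the zero-benefit signals of $\hat g$ must have nonnegative total ECA contribution. Under your biconditional those signals have $\delta(s)\le0$, which is the wrong direction, and nothing you assume excludes $\delta(s)<0$. This can genuinely occur: under conditional independence $\delta(s)=\operatorname{Cov}_{i\sim Q}(\nu_s(i),p_s(i))$, which is nonnegative for a single-claim cluster but can be negative for a multi-claim one. Concretely, let $\hat g$ have signals $a,z$ with $\delta(a)=b(a)=1$, $\delta(z)=-0.9$, $b(z)=0$. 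Let $g'$ have signals $c,z'$ with $\delta(c)=b(c)=0.2$ and $\delta(z')=b(z')=0$, with $B$ additive on each. Every hypothesis you state holds with $\beta=\rho=1$ and $\kappa=0$. Yet ECA selects $g'$ (since $0.2>0.1$), while $B(\mathcal S_{g'})=0.2<1=\alpha\,B(\mathcal S_{\hat g})$.

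The missing ingredient is the positivity half of the paper's A2: $\delta(s)\ge0$ for every signal of every candidate, together with $\delta(s)=0\Rightarrow b(s)=0$. Your aside that one might ``assume none is present'' would do this job if imposed on all candidates. But you invoke it only for $g^*$, where it is not needed, rather than for $\hat g$, where it is. With positivity, $\beta\,\Delta(g)\le\sum_{s\in\mathcal S_g}b(s)\le\rho\beta\,\Delta(g)$ holds for every $g\in\mathcal G$, and your chain becomes the paper's proof.

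Relatedly, the detour of replacing $g^*$ by its positive-$\delta$ subset is unnecessary. Monotonicity gives $b(s)\ge0\ge\beta\,\delta(s)$ whenever $\delta(s)\le0$, so $\sum_{s\in\mathcal S_{g^*}}b(s)\ge\beta\,\Delta(g^*)$ holds as written. The detour as phrased would also apply the redundancy bound to a set for which you have not assumed it.
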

The approximation ratio depends on how consistently reviewers benefit from signals. 
The guarantee approaches optimality as $\rho\to1$ and $\kappa\to0$. Appendix~\ref{app:checklist_mse} gives the formal theorem and proof.



\section{An LLM-assisted Pipeline for Learning Checklists}
\label{sec:method}

We realize our approach using an LLM-assisted pipeline that learns a checklist from a historical review corpus by maximizing the expected correlated agreement measure $A_{\mathrm{same}}-A_{\mathrm{diff}}$ (\eqref{eq:gstar}). 

\begin{figure}[t]
\centering
\resizebox{\textwidth}{!}{
\input{figures/figstyle}%
\newcommand{\epill}[3]{{\setlength{\fboxsep}{0.8pt}\setlength{\fboxrule}{0.45pt}%
  \ifnum#2=1\fcolorbox{#1}{#1}{\makebox[11pt]{\textcolor{white}{\fontsize{6}{6}\selectfont\bfseries\vphantom{vi}#3}}}%
  \else\fcolorbox{#1}{white}{\makebox[11pt]{\textcolor{#1}{\fontsize{6}{6}\selectfont\bfseries\vphantom{vi}#3}}}\fi}\hspace{2.5pt}}%
\begin{tikzpicture}[
  font=\fontsize{7}{8}\selectfont,
  hdr/.style={fig title, text=#1, rectangle, minimum height=0.56cm, text width=4.02cm, inner xsep=0.21cm, align=left},
  art/.style={fig annot, inner sep=1pt},
  arr/.style={fig arrow},
  stepbox/.style={rounded corners=2pt, draw=black!22, line width=0.5pt, fill=white, text width=3.72cm, minimum width=4.02cm, minimum height=2.62cm, align=left, inner xsep=4.5pt, inner ysep=3.5pt, anchor=north west, font=\fontsize{7}{8}\selectfont, execute at begin node={\hyphenpenalty=10000\exhyphenpenalty=10000\relax}},
  exlab/.style={fig body, text width=2.96cm, align=left, anchor=west, execute at begin node={\hyphenpenalty=10000\exhyphenpenalty=10000\relax}},
]
\node[hdr=stA] (A) at (0,0)    {Reviews $\rightarrow$ atomic claims};
\node[hdr=stB] (B) at (4.72,0) {Strip, embed, merge};
\node[hdr=stC] (C) at (9.44,0) {One checklist tree};
\draw[arr] ([xshift=1.5pt]A.east) -- ([xshift=-1.5pt]B.west);
\draw[arr] ([xshift=1.5pt]B.east) -- ([xshift=-1.5pt]C.west);

\node[stepbox] (eA) at ([xshift=0.21cm,yshift=-1pt]A.south west) {%
  \epill{stA}{1}{i}{\color{stA}\bfseries Split}\ \ ``\ldots the experimental evaluation omits some important baselines such as DISCERN\ldots''\\[3pt]
  \epill{stA}{1}{ii}{\color{stA}\bfseries Label}\ \ fact, settled by the paper\\[3pt]
  \epill{stA}{0}{iii}{\color{stA}\bfseries Filter}\ \ kept as a concrete claim};
\node[stepbox] (eB) at ([xshift=0.21cm,yshift=-1pt]B.south west) {%
  \epill{stB}{0}{iv}{\color{stB}\bfseries Strip}\ \ ``\ldots does not include a comparison to the [METHOD] baseline.''\\[3pt]
  \epill{stB}{0}{v}{\color{stB}\bfseries Embed}\ \ close to ``\ldots no comparisons to non-trivial baselines\ldots'' from another paper\\[3pt]
  \epill{stB}{1}{vi}{\color{stB}\bfseries Judge}\ \ exactly the same claims merged};
\node[stepbox] (eC) at ([xshift=0.21cm,yshift=-1pt]C.south west) {};

\begin{scope}[shift={(eC.north west)}]
  \node[fig body, anchor=north west] at (0.1,-0.1)
    {\epill{stC}{0}{vii}\epill{stC}{1}{viii}{\color{stC}\bfseries Cluster, name}};
  \draw[stC!55, line width=0.5pt, dash pattern=on 1.4pt off 1pt] (0.08,-0.95) -- (0.92,-0.95);
  \draw[stC!55, line width=0.5pt, dash pattern=on 1.4pt off 1pt] (0.08,-1.75) -- (0.92,-1.75);
  \draw[black!30, line width=0.5pt]
    (0.48,-0.55) -- (0.72,-0.95)
    (0.26,-0.95) -- (0.14,-1.75)
    (0.72,-0.95) -- (0.61,-1.75) (0.72,-0.95) -- (0.83,-1.75);
  \draw[stC, line width=1pt]
    (0.48,-0.55) -- (0.26,-0.95) -- (0.37,-1.75)
    (0.37,-1.75) -- (0.28,-2.05) (0.37,-1.75) -- (0.37,-2.05) (0.37,-1.75) -- (0.46,-2.05);
  \fill[black!40] (0.48,-0.55) circle (1.2pt)
    (0.72,-0.95) circle (1.4pt) (0.14,-1.75) circle (1.2pt)
    (0.61,-1.75) circle (1.2pt) (0.83,-1.75) circle (1.2pt);
  \fill[stC] (0.26,-0.95) circle (1.9pt) (0.37,-1.75) circle (1.7pt);
  \fill[stC!70] (0.28,-2.05) circle (0.9pt) (0.37,-2.05) circle (0.9pt) (0.46,-2.05) circle (0.9pt);
  \node[exlab] at (1.00,-0.95) {{\color{stC}\bfseries $k{=}100$}\ \ Are baselines and SOTA comparisons adequate?};
  \node[exlab] at (1.00,-1.75) {{\color{stC}\bfseries $k{=}6{,}000$}\ \ Missing important experimental baselines};
  \node[fig body, anchor=north west, text width=3.72cm, align=left] at (0.1,-2.3)
    {{\color{stC}\bfseries Cut}\ \ \Cref{tab:excess} chooses $k$ by ECA.};
\end{scope}

\begin{pgfonlayer}{background}
\fill[fig panel=stAl] (A.north west) rectangle ($(A.east |- eA.south)+(0,-5pt)$);
\fill[fig panel=stBl] (B.north west) rectangle ($(B.east |- eB.south)+(0,-5pt)$);
\fill[fig panel=stCl] (C.north west) rectangle ($(C.east |- eC.south)+(0,-5pt)$);
\end{pgfonlayer}

\node[art, anchor=south west] at ([yshift=1pt]A.north west) {In: reviews of 298 papers};
\node[art, anchor=south] at ($(A.north east)!0.5!(B.north west)+(0,1pt)$) {15{,}118 concrete claims};
\node[art, anchor=south] at ($(B.north east)!0.5!(C.north west)+(0,1pt)$) {claim embeddings};
\node[art, anchor=south east] at ([yshift=1pt]C.north east) {Out: checklists at $k = 50$ to $6{,}000$};
\node[art, anchor=north east] at ([yshift=-7pt]C.east |- eC.south) {\epill{black!55}{1}{}LLM call\quad\epill{black!55}{0}{}no LLM};
\end{tikzpicture}%
}
\caption{The checklist pipeline, shown on one NeurIPS~2021 review sentence. Steps (i)--(viii) follow the text; filled badges call an LLM. Counts are for the NeurIPS~2021 corpus.}
\label{fig:pipeline}
\end{figure}

Given a review corpus, we grow a tree of checklists with eight steps (\Cref{fig:pipeline}). Claims from papers for the evaluation in \Cref{sec:eval} are held out from the start, so they do not enter the tree and cannot leak into the checklist. (i)~Use an LLM (Grok~4.5) to split every review into atomic claims that refer to a self-contained, independent comment about the paper. Claims across different papers are put into a single pool. (ii)~Use an LLM to label each claim with what it asserts, using the proposition types of \citet{hua2019argument} (fact, evaluation, request, quote, reference, or non-argument), and what source would settle it: the paper, the literature, the world, or nothing (\Cref{app:extract}). (iii)~Filter any claims that are not labeled as facts or requests that the paper itself can settle. (iv)~Strip paper-specific numbers, symbols, and rare method names to make claims applicable to other papers. (v)~Embed each claim with Qwen3-Embedding-0.6B. (vi)~Merge claims judged to be exactly the same by the LLM (Grok~4.5) (\Cref{app:cluster}). (vii)~Grow a Ward tree over the merged claims, representing each by the mean embedding of its members. (viii)~Use an LLM to relabel each cluster (checklist item) with a question that can be posed against the paper, given a sample of its members.

\begin{table}[t]
\centering
\caption{Concrete-claim hierarchies for NeurIPS~2021 and ICLR~2022 at eight granularities $k$ (all values in \%). Coverage is the mean share of the $k$ items that a paper's reviewers cover. Recall is the share of 32 human-raised points (two on each of 16 papers) that an LLM finds from the paper with the checklist. ECA is $A_{\mathrm{same}}-A_{\mathrm{diff}}$ (\Cref{sec:coarsening}), the sample estimate of \eqref{eq:gstar}, with paper-level bootstrap 95\% CIs. On both corpora, $A_{\mathrm{same}}$ compares pairs of reviewers of the same paper and $A_{\mathrm{diff}}$ compares reviewers of different papers. Bold marks the highest ECA per corpus.}
\label{tab:excess}
\footnotesize
\setlength{\tabcolsep}{4pt}
\begin{tabular}{rrrrrrrrr}
\toprule
& \multicolumn{5}{c}{NeurIPS 2021} & \multicolumn{3}{c}{ICLR 2022} \\
\cmidrule(lr){2-6} \cmidrule(lr){7-9}
$k$ & Coverage & Recall & $A_{\mathrm{same}}$ & $A_{\mathrm{diff}}$ & ECA [95\% CI] & $A_{\mathrm{same}}$ & $A_{\mathrm{diff}}$ & ECA [95\% CI] \\
\midrule
50 & 42.0 & 96.9 & 22.1 & 12.4 & $+9.8$ [8.5, 11.1] & 38.9 & 27.2 & $+11.7$ [10.4, 13.1] \\
100 & 26.1 & 93.8 & 16.2 & 6.1 & $\mathbf{+10.2}$ [9.0, 11.3] & 28.8 & 15.9 & $\mathbf{+12.9}$ [11.8, 14.0] \\
200 & 14.8 & 84.4 & 12.3 & 3.2 & $+9.0$ [8.2, 10.0] & 21.7 & 9.2 & $+12.6$ [11.7, 13.5] \\
500 & 6.7 & 68.8 & 8.7 & 1.4 & $+7.3$ [6.7, 8.0] & 15.8 & 4.2 & $+11.6$ [10.9, 12.3] \\
600 & 5.7 & 65.6 & 8.2 & 1.1 & $+7.1$ [6.4, 7.8] & 14.9 & 3.5 & $+11.5$ [10.8, 12.1] \\
1{,}000 & 3.6 & 56.2 & 6.6 & 0.7 & $+5.9$ [5.4, 6.5] & 13.1 & 2.3 & $+10.8$ [10.2, 11.4] \\
2{,}000 & 1.9 & 34.4 & 4.5 & 0.4 & $+4.2$ [3.8, 4.5] & 10.9 & 1.2 & $+9.8$ [9.2, 10.4] \\
6{,}000 & 0.7 & 21.9 & 2.0 & 0.1 & $+1.8$ [1.6, 2.1] & 8.1 & 0.4 & $+7.7$ [7.2, 8.2] \\
\bottomrule
\end{tabular}
\end{table}

Given a checklist tree, cutting at any level returns a checklist at a given granularity. For example, cutting the tree near its root returns fewer, higher-level checklist items such as ``Are baselines and SOTA comparisons adequate?'' Cutting the tree near its leaves returns more lower-level checklist items such as ``Does the paper report experiments on ATIS, Advising, or WikiSQL?'' 

We next calculate the expected correlated agreement (ECA) for each checklist in the tree. Because the original human claims all lie at the leaf level, we must first map each review claim into the new checklist space. For each reviewer, we label a checklist item as applying if there is at least one original claim in that reviewer's review that falls under the checklist item. 

We demonstrate this pipeline on two review datasets: the NeurIPS 2021 dual-committee experiment~\citep{beygelzimer2023has} (the public OpenReview subset: 298 papers, each reviewed by two independent committees) and the ICLR 2022 review dataset~\citep{reviewarena2026} (300 selected papers: 175 rejected, 100 posters, 19 spotlights, and 6 orals). The NeurIPS dataset is the richest corpus because of its dual-committee design; the ICLR 2022 dataset is a more typical one. 

\Cref{tab:excess} reports ECA at eight different cuts (granularities) on both datasets. We also report coverage: for each paper, the share of the $k$ items raised by at least one of its reviewers, averaged over papers. 
We also report recall, the fraction of 32 points that human reviewers raised (two on each of 16 papers), that an LLM also finds when given the paper and the checklist (\Cref{app:recall}).
Coverage, recall, and $A_{\mathrm{diff}}$ all decrease monotonically with $k$, indicating that a checklist with a lower $k$ is more general. ECA is positive at every cut. On both venues we estimate it from pairs of reviewers of the same paper, and it is highest at $k{=}100$ ($+10.2$ points on NeurIPS and $+12.9$ on ICLR), which we take as $g^*$.
\section{Evaluation: Do the Selected Checklists Reduce Reviewer Error?}
\label{sec:eval}

We ask two questions to evaluate our checklist learning pipeline: (Q1)~when the review committee size $N$ is small, can the learned checklist reduce review MSE? (Q2)~Given a set of candidate checklists, does the proxy objective ECA choose the one that gives the minimum review MSE?

The time commitment and expertise required for peer review make a human study difficult. It could take dozens of independent reviewers to reach a stable population target $\mu_i(\varnothing)$ against which the review MSE can be computed, and reviewers would need to score each paper using multiple checklists. No existing peer review dataset satisfies either criterion, and collecting such a large review dataset is beyond the scope of this paper.

We therefore assess our method under our model assumptions with a larger LLM-simulated review study, following precedent in prior work on review tools~\citep{jin2024agentreview,erturk2026large} and evidence that LLM simulations can often replicate aggregate-level human population dynamics~\citep{ashokkumar2026large,cui2025large}. We also validate error reduction with human reviewers in a smaller experiment.

\subsection{Simulation Experiment}
\label{sec:sim_design}

Our simulation constructs a simulated reviewer population from which $\mu_i(\varnothing)$ can be estimated precisely. Each simulated reviewer mimics the limited coverage we observed in human reviews (\Cref{sec:neurips}), by focusing on a related subset of concerns and raising a small number of concrete issues before assigning a score. We then measure how the variance, bias, and total error of the committee mean change with committee size $N$, with and without a checklist. To compare the learned checklist with an existing one, we also run the same protocol with the official NeurIPS~2021 paper checklist~\citep{neurips2021checklist}, which reviewers that year were asked to use as a reference during review.

\paragraph{Datasets.}
We evaluate checklists learned on the NeurIPS~2021 dual-committee dataset~\citep{beygelzimer2023has} and the ICLR~2022 review dataset~\citep{reviewarena2026}. We evaluate on a set of 10 strong, 10 borderline, and 10 weak or rejected papers from each dataset, decided based on their review scores and decisions. Their claims are held out when the checklist tree is grown (\Cref{sec:method}), so the items are not fit on the papers used to measure error. Papers from each venue are reviewed with the checklist learned from that venue's review corpus, and all simulated reviewers are LLM agents running Composer~2.5.

\paragraph{Population target.}
We estimate the population target $\mu_i(\varnothing)$ by aggregating a large number of independent reviewers. To mimic human reviewers, we create multiple LLM reviewers with limited coverage of issues and capacity by assigning each a different branch of the checklist tree at the $k{=}50$ level (\Cref{app:eval} gives the details of the assignment). The reviewers review the paper within their branch and, in a separate round, grade it according to the paper content and their own review. We draw one such reviewer for each of the $F$ content branches ($F{=}49$ for NeurIPS and $47$ for ICLR). With this number of reviewers, the aggregate score that defines the population target $\hat\mu_i(\varnothing)$ is stable (standard error about 0.13 points; \Cref{fig:gt} in \Cref{app:results}).

\paragraph{Without checklist.}
The without-checklist setting shows how the MSE changes as the committee grows. 
A committee of size $N$ consists of $N$ reviewers drawn independently from the population described above, 
which reproduces the reviewer lottery variance observed in duplicated-review experiments~\citep{cortes2021inconsistency,beygelzimer2023has}. Since the committees are drawn from the same population that determines the target, this condition is unbiased by construction: its error consists only of variance and decreases at a rate of $1/N$. This sets the conditions to evaluate whether a checklist reduces variance by more than the squared bias it introduces (\Cref{eq:inequality}).

\paragraph{With checklist.}
In the checklist setting, each LLM reviewer reviews the paper using the learned checklist at a given granularity. The reviewer goes through every item on the checklist, keeps those applicable to the paper, and writes concrete criticisms according to the checklist items, each supported by a quote from the paper. In a separate round, it grades the paper with the same scoring prompt as the without-checklist setting. We run eight reviewers per paper, who each see the checklist in a different random order. To answer Q1, we compare the MSE of checklist-assigned committees with that of without-checklist committees of the same size. To answer Q2, we repeat the condition at several granularities, $k\in\{50,100,500,1000,2000\}$ on NeurIPS and additionally $k{=}200$ on ICLR, and check whether the granularity with the highest ECA also has the lowest MSE.

\paragraph{Metrics.}
We compare the expected variance, bias, and MSE of the committee mean with and without the checklist. For each paper and condition, we treat the observed reviewer scores as an empirical reviewer population. For a committee of size $N$, under independent sampling the variance of the committee mean is the single-reviewer variance divided by $N$. The bias is the difference between the average checklist score and the target, $\hat\mu_i(g)-\hat\mu_i(\varnothing)$, and does not depend on the committee size $N$. 
The review MSE combines the two as $\mathrm{MSE}=\mathrm{Variance}+\mathrm{Bias}^2$, following \eqref{eq:bias_variance}. We average over papers with equal weight, squaring each paper's bias before averaging so that biases of opposite sign on different papers do not cancel.

\subsection{Simulation Results}
\label{sec:sim_results}

\begin{figure}[t]
\centering
\includegraphics[width=\textwidth]{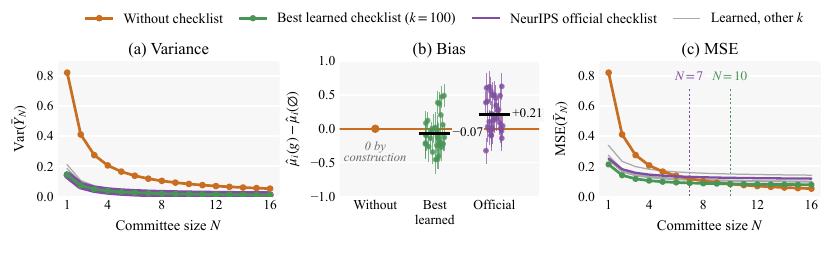}
\caption{Variance, bias, and MSE of the committee mean on NeurIPS~2021 (30 papers). (a)~Variance falls as $1/N$ in every condition but starts far lower with any checklist; the learned and official curves coincide. (b)~Each point is one paper's bias, the checklist mean minus the target $\hat\mu_i(\varnothing)$ ($\pm1$ SE); black bars average over papers. Without a checklist the bias is zero by construction. (c)~MSE, the variance plus the squared bias. Dotted lines mark the smallest $N$ at which the no-checklist committee is at least as accurate. Grey lines are learned checklists at other granularities, $k\in\{50,500,1000,2000\}$.}
\label{fig:sim}
\end{figure}

\paragraph{Q1: The learned checklist reduces MSE for small committees.}
On NeurIPS (\Cref{fig:sim}c), the best learned checklist ($k{=}100$) lowers the MSE of a single reviewer from 0.82 to 0.21 and halves it for a committee of four (0.10 vs.\ 0.21). The gain comes from lower variance, at the price of a small bias (\Cref{fig:sim}a,b). The no-checklist committee, which is unbiased by construction, catches up with the best checklist at about $N{=}10$. On ICLR (\Cref{fig:sim_iclr} in \Cref{app:results}), the $k{=}100$ checklist lowers the MSE of a single reviewer from 0.84 to 0.34, and the no-checklist committee catches up at $N{=}5$.

\paragraph{Comparison with the official checklist.}
The official NeurIPS~2021 checklist reduces variance as much as the best learned one but has higher bias. Its MSE is about 0.04 higher than that of the learned checklist at every committee size (0.25 vs.\ 0.21 at $N{=}1$, 0.14 vs.\ 0.10 at $N{=}4$).

\paragraph{Q2: ECA picks a low-MSE granularity.}
On both venues, the checklist with the highest ECA, $k{=}100$, has the lowest MSE for every committee of three or more reviewers, and the one with the lowest ECA, $k{=}2{,}000$, has the highest (at $N{=}4$, 0.10 vs.\ 0.18 on NeurIPS and 0.21 vs.\ 0.29 on ICLR), suggesting ECA is predictive of how much a checklist reduces review error. On NeurIPS, $k{=}100$ is lowest at every $N$. On ICLR, $k{=}200$ and $k{=}1{,}000$ are slightly lower at $N{=}1$ (0.32 and 0.33 vs.\ 0.34), and $k{=}200$ is level with it at $N{=}2$ (both 0.25).

\subsection{Human Study}
\label{sec:human}
In a human validation study, 20 participants each rank two sets of three NeurIPS 2021 papers, one set with the checklist at $k{=}100$ (\Cref{tab:excess}) already filled out for each paper by an LLM (Grok~4.5), and the other without a checklist. One set is on neural network pruning, which removes weights or channels to make a network cheaper to train or run, and the other is on adversarial robustness. Topic and checklist condition are counterbalanced, and \Cref{app:human_study} lists the papers. A ranking task has two advantages. The order is a less contestable target than using the small-$N$ average of human scores as $\mu_i(\varnothing)$, since the selected papers differ in mean scores (from about 5 to 7), writing quality, and citations. Ranking is also easier for participants than writing full reviews and assigning scores.

We compute variance and bias on pairs of papers, as in \citet{shivaswamy2021bias} and \citet{cormack2019bias}. Each set has three pairs. For each pair, we record 1 if a participant puts the two papers in the NeurIPS order and 0 otherwise, and apply \eqref{eq:bias_variance} with a target of 1. If a share $p$ of participants get the pair right, the variance is $p(1-p)$ and the squared bias is $(1-p)^2$. Their sum, averaged over pairs, is the fraction of pairs a participant gets wrong. With the checklist, a single ranking has variance 0.16 and squared bias 0.057 (MSE 0.22); without it, both are 0.24 (MSE 0.48). For a committee of three, the MSE is 0.11 with the checklist and 0.32 without. 

\section{Conclusion}
We formalize checklist design for peer review as selecting the coarsening that best represents the space of review claims: a shared checklist helps a small committee only if it removes more lottery variance than the bias it adds. We then build an LLM-assisted pipeline that turns a historical review corpus into a tree of paper-checkable checklist items and chooses the granularity that maximizes a proxy objective, the expected correlated agreement (ECA), which measures how much more two independent reviews of the same paper agree than reviews of different papers. We apply the pipeline to NeurIPS~2021 and ICLR~2022. In simulated review, the chosen checklist reduces the error of a three-reviewer committee by 58\% on NeurIPS and 21\% on ICLR, and on both venues the checklist with the highest ECA has the lowest error for every committee of three or more. In a human ranking study with 20 participants, the checklist lowers ranking MSE from 0.48 to 0.22.

\section*{AI Use Statement}
We used AI tools to proofread and polish the writing of this paper and to organize the code submitted with it.

\bibliography{ai_review}

\appendix
\crefalias{section}{appendix}
\crefalias{subsection}{subappendix}
\section{Full assumptions and proof for \Cref{thm:agreement_mse}}
\label{app:checklist_mse}

Let $Q$ be the target distribution of papers and $P_i$ the eligible-reviewer distribution for paper $i$. Fix the panel size $m\ge1$. Let $\mathcal C_{\mathrm{ver}}$ be a finite, nonempty space of verifiable atomic claims, and let $\mathcal G$ be a fixed, finite, nonempty family of candidate checklist mappings. Each $g\in\mathcal G$ maps a claim to its cluster in a partition $\mathcal S_g$ of $\mathcal C_{\mathrm{ver}}$: $g(c)=s$ when $c\in s\in\mathcal S_g$. Each cluster $s$ is one checklist item, also called a signal.
\[
\mathcal E=\bigcup_{g\in\mathcal G}\mathcal S_g,
\qquad
\mathcal D=\{T:T\subseteq\mathcal S_g\text{ for some }g\in\mathcal G\}.
\]
Here $\mathcal E$ contains all candidate items, and $\mathcal D$ contains all candidate checklists and their partial versions. For $T\in\mathcal D$, define
\[
\mathcal C_T=\bigcup_{s\in T}s,
\qquad g_T:\mathcal C_T\to T,
\qquad g_T(c)=s\quad\text{for the unique }s\in T\text{ containing }c.
\]
The partial mapping $g_T$ surfaces only signals in $T$ under the same rule as the original $g$. 
In particular, $g_{\mathcal S_g}=g$ and $g_{\varnothing}=\varnothing$ is the empty mapping representing unaided review.

For a mapping $h$, let $Y_{ir}(h)\in\mathbb R$ be reviewer $r$'s actual score on paper $i$. Draw $r_1,\ldots,r_m$ independently from $P_i$ and define
\[
\bar Y_{im}(h)=\frac1m\sum_{j=1}^mY_{ir_j}(h),
\qquad
\mu_i(\varnothing)=\mathbb E_{r\sim P_i}[Y_{ir}(\varnothing)\mid i].
\]
The target $\mu_i(\varnothing)$ is the unaided population mean. Define
\begin{equation}
\begin{aligned}
\mathrm{MSE}_i(h,m)&=\mathbb E[(\bar Y_{im}(h)-\mu_i(\varnothing))^2\mid i],\\
R_m(h)&=\mathbb E_{i\sim Q}[\mathrm{MSE}_i(h,m)].
\end{aligned}
\label{eq:population_risk}
\end{equation}
For fixed $m$, define the benefit and marginal benefit as set functions:
\begin{equation}
B(T)=R_m(\varnothing)-R_m(g_T),
\qquad B(s\mid T)=B(T\cup\{s\})-B(T).
\label{eq:mse_benefit}
\end{equation}
The marginal is defined for $s\notin T$ whenever $T\cup\{s\}\in\mathcal D$. Thus $B(\varnothing)=0$, and $B(\mathcal S_g)=R_m(\varnothing)-R_m(g)$.
Note that the MSE and $R_m$ take mappings such as $h$ and $g$ as arguments, while $B$ takes signal sets.

\paragraph{Signals and agreement in historical (unaided) reviews.}
Let $C_{ir}$ be the atomic claims in reviewer $r$'s unaided review of paper $i$, and let $\bar n=\mathbb E_{i\sim Q}\mathbb E_{r\sim P_i}|C_{ir}\cap\mathcal C_{\mathrm{ver}}|\in(0,\infty)$ be the expected number of verifiable claims per review. Every candidate partitions the same space $\mathcal C_{\mathrm{ver}}$, so $\bar n$ does not depend on $g$. For each signal $s\in\mathcal E$, put
\[
Z_{ir}(s)=\mathbf1[C_{ir}\cap s\ne\varnothing],
\qquad
p_s(i)=\Pr(Z_{ir}(s)=1\mid i),
\]
\[
\nu_s(i)=\frac{\mathbb E[|C_{ir}\cap s|\mid i]}{\bar n},
\qquad
d_s=\operatorname{Cov}_{i\sim Q}\bigl(\nu_s(i),p_s(i)\bigr).
\]
Here $\mathbf1[\cdot]$ is the indicator function, $p_s(i)$ is the mention probability under $P_i$, and $\nu_s(i)$ is the expected number of claims a review of paper $i$ makes in $s$, in units of the average review length. The quantity $d_s$ measures whether the papers on which reviewers write more about $s$ are also the papers on which other reviewers raise $s$.
For a candidate checklist $g$, restrict these indicators to $s\in\mathcal S_g$ to obtain the review vector $Z_{ir}^g$. As in the main text, $A_{\mathrm{same}}(g)$ is the share of review claims whose signal an independent review of the same paper also raises, and $A_{\mathrm{diff}}(g)$ is that share when the second review is of an independently drawn paper. The selection objective is $\Delta(g)=A_{\mathrm{same}}(g)-A_{\mathrm{diff}}(g)$.

\paragraph{Assumptions.}
The theorem uses the following four conditions.
\begin{enumerate}
\item[\textbf{A1.}] \textbf{Representative sampling.}
We assume historical reviews represent the target paper and reviewer populations, review pairs are sampled independently conditional on the paper.
Explicitly,
\[
\begin{aligned}
A_{\mathrm{same}}(g)
&=\frac1{\bar n}\,\mathbb E_{i\sim Q}\mathbb E_{r,r'\overset{\mathrm{i.i.d.}}{\sim}P_i}
 \left[\sum_{c\in C_{ir}\cap\mathcal C_{\mathrm{ver}}}Z_{ir'}(g(c))\,\middle|\,i\right],\\
A_{\mathrm{diff}}(g)
&=\frac1{\bar n}\,\mathbb E_{i,j\overset{\mathrm{i.i.d.}}{\sim}Q}\mathbb E_{r\sim P_i,\,r'\sim P_j}
 \left[\sum_{c\in C_{ir}\cap\mathcal C_{\mathrm{ver}}}Z_{jr'}(g(c))\,\middle|\,i,j\right].
\end{aligned}
\]
Under this condition, ECA decomposes over signals, $\Delta(g)=A_{\mathrm{same}}(g)-A_{\mathrm{diff}}(g)=\sum_{s\in\mathcal S_g}d_s$, as established in the proof.

\item[\textbf{A2.}] \textbf{Positivity.}
We assume every signal has $d_s\ge0$: papers on which reviewers write more about $s$ are not papers on which other reviewers are less likely to raise it. This holds, for example, if a review that raises $s$ makes the same expected number $\lambda_s$ of claims in $s$ on every paper, since then $\nu_s(i)=\lambda_s p_s(i)/\bar n$ and $d_s=(\lambda_s/\bar n)\operatorname{Var}_{i\sim Q}(p_s(i))$.
We further assume signals whose contribution to ECA is positive ($d_s>0$) provide positive stand-alone MSE benefits, while signals with $d_s=0$ provide zero stand-alone benefit.
Every signal with $d_s>0$ has $B(\{s\})>0$, and every signal with $d_s=0$ has $B(\{s\})=0$. 
Let $\mathcal E_+=\{s\in\mathcal E:d_s>0\}$. Define
\begin{equation}
\beta_s=\frac{B(\{s\})}{d_s}\quad(s\in\mathcal E_+),
\qquad
\rho=\max_{s,t\in\mathcal E_+}\frac{\beta_s}{\beta_t}
=\frac{\max_{s\in\mathcal E_+}\beta_s}{\min_{s\in\mathcal E_+}\beta_s}.
\label{eq:score_relevance}
\end{equation}
The quantity $\beta_s$ is the MSE benefit from signal $s$ alone per unit of ECA. The ratio $\rho\ge1$ measures the largest relative variation in this quantity across signals.

\item[\textbf{A3.}] \textbf{Monotonicity and submodularity.}
We assume adding a signal cannot increase MSE (monotonicity), and its marginal MSE reduction cannot grow as more signals are provided (submodularity). 
For $T\subseteq U$, $s\notin U$, and $U\cup\{s\}\in\mathcal D$,
\begin{equation}
B(s\mid T)\ge B(s\mid U)\ge0.
\label{eq:benefit_submodular}
\end{equation}
This condition compares compatible subsets of candidate checklists. In particular, $0\le B(s\mid T)\le B(\{s\})$, so signals that have zero stand-alone benefit have zero marginal benefit in every compatible context.


\item[\textbf{A4.}] \textbf{Limited redundancy.}
We assume signals in the selected checklist have a positive fraction of their stand-alone benefit on average when each is added after all the others.
For a candidate item set $S$ and $s\in S$, define its leave-one-out contribution
\[
\delta_s(S)=B(S)-B(S\setminus\{s\})=B(s\mid S\setminus\{s\}).
\]
For $B(\{s\})>0$, the fractional loss $1-\delta_s(S)/B(\{s\})\in[0,1]$ measures how much of signal $s$'s stand-alone benefit is redundant when all other signals in $S$ are present. For $W(S)=\sum_{s\in S}B(\{s\})>0$, define the weighted average loss
\begin{equation}
\begin{aligned}
\bar\kappa(S)
&=\sum_{s\in S\cap\mathcal E_+}\frac{B(\{s\})}{W(S)}
  \left(1-\frac{\delta_s(S)}{B(\{s\})}\right)\\
&=1-\frac{\sum_{s\in S}\delta_s(S)}{W(S)}\in[0,1].
\end{aligned}
\label{eq:average_redundancy}
\end{equation}
Zero-benefit signals carry zero weight and have $\delta_s(S)=0$ by A3. For the signal set $S^*=\mathcal S_{g^*}$ of the selected checklist $g^*$ in \eqref{eq:gstar}, we require
\begin{equation}
\bar\kappa(S^*)\le\kappa,\qquad \kappa\in[0,1).
\label{eq:limited_redundancy}
\end{equation}
\end{enumerate}
These are conditions on actual reviewer behavior.
Factors (including $\rho$ and $\kappa$) that give a useful bound in \Cref{thm:agreement_mse} must be supported by assisted-review outcomes.

The following restates the main-text result using the definitions above.
\begin{theorem}
\label{thm:agreement_mse_appendix}
Assume A1--A4. Let
\[
g^*\in\operatorname*{arg\,max}_{g\in\mathcal G}\Delta(g),
\qquad
 g_{\mathrm{opt}}\in\operatorname*{arg\,min}_{g\in\mathcal G}R_m(g).
\]
Then
\begin{equation}
B(\mathcal S_{g^*})\ge\alpha B(\mathcal S_{g_{\mathrm{opt}}}),
\qquad \alpha=\frac{1-\kappa}{\rho}\in(0,1].
\label{eq:constant_approximation_factor}
\end{equation}
Equivalently,
\begin{equation}
R_m(g^*)\le(1-\alpha)R_m(\varnothing)+\alpha R_m(g_{\mathrm{opt}}).
\label{eq:mse_guarantee}
\end{equation}
\end{theorem}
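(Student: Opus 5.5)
The proof runs in three steps: express ECA as a sum of per-signal covariances, pass from ECA to stand-alone benefits through $\rho$, and pass from stand-alone benefits to the realized benefit of $g^*$ through submodularity and $\kappa$.

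\textbf{Step 1: decompose $\Delta(g)$ over signals.} Under A1, write $A(C_{ir},Z^g_{ir'})=\sum_{s\in\mathcal S_g}|C_{ir}\cap s|\,Z_{ir'}(s)$. This works because $g$ partitions $\mathcal C_{\mathrm{ver}}$, so each verifiable claim lies in exactly one signal. Conditional independence of $r,r'$ given $i$ gives
\[
\mathbb E\bigl[|C_{ir}\cap s|\,Z_{ir'}(s)\mid i\bigr]=\bar n\,\nu_s(i)\,p_s(i).
\]
In the different-paper term the two reviews are independent, so the expectation factors as $\bar n\,\mathbb E_Q[\nu_s]\,\mathbb E_Q[p_s]$. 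Subtracting the two terms and dividing by $\bar n$ yields
\[
\Delta(g)=\sum_{s\in\mathcal S_g}\Bigl(\mathbb E_Q[\nu_s p_s]-\mathbb E_Q[\nu_s]\,\mathbb E_Q[p_s]\Bigr)=\sum_{s\in\mathcal S_g}d_s.
\]

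\textbf{Step 2: compare stand-alone benefits.} Let $\beta_{\min}=\min_{s\in\mathcal E_+}\beta_s$ and $\beta_{\max}=\max_{s\in\mathcal E_+}\beta_s$, so that $\rho=\beta_{\max}/\beta_{\min}$. By A2, every $s$ satisfies $\beta_{\min}d_s\le B(\{s\})\le\beta_{\max}d_s$; signals with $d_s=0$ have $B(\{s\})=0$, so the bound holds for them trivially. Hence, for $W(S)=\sum_{s\in S}B(\{s\})$,
\[
\beta_{\min}\,\Delta(g)\le W(\mathcal S_g)\le\beta_{\max}\,\Delta(g).
\]
Applying the upper bound to $g_{\mathrm{opt}}$, then optimality of $g^*$ for $\Delta$, then the lower bound to $g^*$:
\[
W(\mathcal S_{g_{\mathrm{opt}}})\le\beta_{\max}\Delta(g_{\mathrm{opt}})\le\beta_{\max}\Delta(g^*)\le\rho\,W(S^*).
\]

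\textbf{Step 3: pass between stand-alone and realized benefit.}
\begin{itemize}
\item \emph{Upper bound for $g_{\mathrm{opt}}$.} Enumerate $\mathcal S_{g_{\mathrm{opt}}}$ in any order and telescope $B$. Each marginal gain is taken relative to a subset of $\mathcal S_{g_{\mathrm{opt}}}$, which lies in $\mathcal D$, so A3 applies and each marginal is at most the stand-alone benefit. This gives $B(\mathcal S_{g_{\mathrm{opt}}})\le W(\mathcal S_{g_{\mathrm{opt}}})$.
\item \emph{Lower bound for $g^*$.} Telescope $B(S^*)$ in an order $s_1,\dots,s_n$. The $k$-th marginal is $B(s_k\mid\{s_1,\dots,s_{k-1}\})$, and by submodularity it is at least $B(s_k\mid S^*\setminus\{s_k\})=\delta_{s_k}(S^*)$, since $\{s_1,\dots,s_{k-1}\}\subseteq S^*\setminus\{s_k\}$. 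Summing, $B(S^*)\ge\sum_{s\in S^*}\delta_s(S^*)=(1-\bar\kappa(S^*))\,W(S^*)$. By A4 this is at least $(1-\kappa)\,W(S^*)$.
\end{itemize}

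\textbf{Combining.} Chaining the three bounds gives
\[
B(\mathcal S_{g_{\mathrm{opt}}})\le W(\mathcal S_{g_{\mathrm{opt}}})\le\rho\,W(S^*)\le\frac{\rho}{1-\kappa}\,B(S^*),
\]
which is \eqref{eq:constant_approximation_factor}. Since $\rho\ge1$ and $\kappa\in[0,1)$, we get $\alpha\in(0,1]$. The form \eqref{eq:mse_guarantee} follows by substituting $B=R_m(\varnothing)-R_m(\cdot)$ and rearranging.

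\textbf{Degenerate case.} If $W(S^*)=0$, then Step 2 forces $W(\mathcal S_{g_{\mathrm{opt}}})=0$, hence $B(\mathcal S_{g_{\mathrm{opt}}})=0$ and the inequality holds trivially. The definition of $\bar\kappa$ needs $W(S^*)>0$, so this case is handled separately.

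\textbf{Main obstacle.} The delicate point is Step 1, specifically the interchange that produces the covariance. The same-paper term must factor as $\nu_s(i)p_s(i)$ using only conditional independence; the paper's weighting by claim counts rather than indicators is exactly what makes $\nu_s$, rather than $p_s$, appear. A second check is that every set used in the telescoping lies in $\mathcal D$, which holds because all of them are subsets of a single $\mathcal S_g$.
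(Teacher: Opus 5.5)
Your proof is correct and follows the paper's argument essentially step for step. You use the same per-signal covariance decomposition $\Delta(g)=\sum_{s\in\mathcal S_g}d_s$, the sandwich $\beta_{\min}\Delta(g)\le W(\mathcal S_g)\le\rho\,\beta_{\min}\Delta(g)$, and the telescoping bounds $\sum_{s\in S}\delta_s(S)\le B(S)\le W(S)$ with A4, merely chaining from $g_{\mathrm{opt}}$ up to $g^*$ rather than downward, and your explicit handling of the case $W(S^*)=0$ (where $\bar\kappa(S^*)$ is undefined) is a small bit of care the paper's proof leaves implicit.
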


\begin{proof}
Because $\mathcal S_g$ partitions $\mathcal C_{\mathrm{ver}}$, the claim sum splits by signal: $\sum_{c\in C_{ir}\cap\mathcal C_{\mathrm{ver}}}Z(g(c))=\sum_{s\in\mathcal S_g}|C_{ir}\cap s|\,Z(s)$ for any review vector $Z$. By A1, the two reviews are independent given their papers, so the same-paper term for signal $s$ conditional on paper $i$ is $\mathbb E[|C_{ir}\cap s|\mid i]\,p_s(i)=\bar n\,\nu_s(i)p_s(i)$. When the second review is of an independently drawn paper, it is $\bar n\,\mathbb E_{i\sim Q}[\nu_s(i)]\,\mathbb E_{i\sim Q}[p_s(i)]$. Hence
\begin{equation}
\Delta(g)=\sum_{s\in\mathcal S_g}
\left\{\mathbb E_{i\sim Q}[\nu_s(i)p_s(i)]
-\mathbb E_{i\sim Q}[\nu_s(i)]\,\mathbb E_{i\sim Q}[p_s(i)]\right\}
=\sum_{s\in\mathcal S_g}d_s.
\label{eq:agreement_variance_identity}
\end{equation}
For a signal set $S=\mathcal S_g$, A3 implies, for every $s\in S$ and $T\subseteq S\setminus\{s\}$,
\[
\delta_s(S)\le B(s\mid T)\le B(\{s\}).
\]
Telescoping along any ordering of $S$ gives
\begin{equation}
\sum_{s\in S}\delta_s(S)\le B(S)\le W(S).
\label{eq:benefit_sum_bounds}
\end{equation}
Thus A4 gives the lower bound needed only for the selected checklist:
\[
B(S^*)\ge(1-\bar\kappa(S^*))W(S^*)\ge(1-\kappa)W(S^*).
\]
For the comparison across signals, let $\beta_{\min}=\min_{s\in\mathcal E_+}\beta_s>0$. This is a derived scale, not an additional assumption. By the definition of $\rho$, every signal with $d_s>0$ has $B(\{s\})=\beta_s d_s$ with $\beta_s\in[\beta_{\min},\rho\beta_{\min}]$. By A2, no signal has $d_s<0$, and signals with $d_s=0$ have zero singleton benefit. Therefore
\[
\beta_{\min}\Delta(g)\le W(\mathcal S_g)\le\rho\beta_{\min}\Delta(g).
\]
Combining this with \eqref{eq:benefit_sum_bounds} and the optimality of $g^*$ for ECA yields
\[
\begin{aligned}
B(\mathcal S_{g^*})
&\ge(1-\kappa)\beta_{\min}\Delta(g^*)\\
&\ge(1-\kappa)\beta_{\min}\Delta(g_{\mathrm{opt}})\\
&\ge\frac{1-\kappa}{\rho}B(\mathcal S_{g_{\mathrm{opt}}}).
\end{aligned}
\]
The scale $\beta_{\min}$ cancels, leaving only the relative ratio $\rho$. Substituting $B(\mathcal S_g)=R_m(\varnothing)-R_m(g)$ proves the MSE formulation.
\end{proof}

If $\mathcal E_+=\varnothing$, A2 gives zero singleton benefits and A3 forces every feasible marginal benefit to be zero. Every candidate then has $B(\mathcal S_g)=0$, and the result is trivial; one may set $\rho=1$ by convention.

\paragraph{Interpreting the approximation factor.}
The two factors in $\alpha=(1-\kappa)/\rho$ describe different possible losses. The factor $1/\rho$ accounts for how much a unit of ECA can differ in stand-alone MSE value across signals. The factor $1-\kappa$ lower-bounds the fraction of total stand-alone benefit retained by the selected checklist. 
For example, if these values differ by at most a factor $1.25$ ($\rho=1.25$) and the average diminishing value of signals is at most 20\% ($\kappa=0.2$), the selected checklist achieves at least $0.8/1.25=0.64$ of the best attainable MSE reduction. 
A4 constrains only the average loss, so individual signals in the selected checklist may still have small marginal contributions.

When $\rho=1$, every signal has the same benefit per unit of ECA. When $\kappa=0$, the selected checklist retains all its stand-alone benefit. If both hold, maximizing ECA is MSE-optimal within $\mathcal G$. 
More generally, the factor stays bounded away from zero when $\rho$ is uniformly bounded and the selected checklist's average redundancy is uniformly below one.

\paragraph{A sharper finite-size bound.}
Let $k=\max_{g\in\mathcal G}|\mathcal S_g|$. Under the same assumptions, the theorem also holds with
\begin{equation}
\alpha_k=\frac{1-\kappa+\kappa/k}{\rho}\ge\alpha.
\label{eq:limited_redundancy_factor}
\end{equation}
To see this, let $S=S^*$ and $n=|S|\le k$. For each $s\in S$, telescope along an ordering that puts $s$ first. By A3, every later signal $t$ adds at least $\delta_t(S)$, so $B(S)\ge B(\{s\})+\sum_{t\in S\setminus\{s\}}\delta_t(S)$. Averaging over the $n$ choices of $s$ gives
\[
B(S)\ge\frac{W(S)}{n}+\left(1-\frac1n\right)\sum_{t\in S}\delta_t(S)
=\left(1-\bar\kappa(S)+\frac{\bar\kappa(S)}{n}\right)W(S)
\ge\left(1-\kappa+\frac{\kappa}{k}\right)W(S).
\]
Using this bound in place of $B(S^*)\ge(1-\kappa)W(S^*)$ in the proof gives $\alpha_k$. This improves $\alpha=(1-\kappa)/\rho$ used by the main-text theorem and its restatement above.

\paragraph{Empirical selection.}
Let $\widehat\Delta$ estimate population agreement with uniform error bound
\[
\sup_{g\in\mathcal G}|\widehat\Delta(g)-\Delta(g)|\le\varepsilon,
\qquad\varepsilon\ge0.
\]
If $\widehat g$ maximizes $\widehat\Delta$, then $\Delta(\widehat g)\ge\Delta(g_{\mathrm{opt}})-2\varepsilon$. If A4 also holds with $\mathcal S_{\widehat g}$ in place of $S^*$, the same proof gives
\[
B(\mathcal S_{\widehat g})\ge\alpha B(\mathcal S_{g_{\mathrm{opt}}})
-2(1-\kappa)\beta_{\min}\varepsilon.
\]
Here $\beta_{\min}$, defined in the proof, converts agreement to squared-score units.

\paragraph{Remarks.}
The result concerns the actual benefit $B$, including reviewer biases.
Submodularity, positivity, and limited redundancy may fail under different reasoning, attention overload, or systematic misinterpretation.
Our theorem does not derive these conditions from the rationality of reviewers.

\section{Claim-level analysis of the NeurIPS 2021 reviews}
\label{app:error}

This appendix supports \Cref{sec:problem} by answering four questions about the NeurIPS~2021 dual-committee corpus: what share of a paper's issues a single reviewer raises (\Cref{sec:onerev}), how much two committees agree (\Cref{sec:overlap}), whether their criticisms are correct (\Cref{sec:verify}), and what the area chair keeps (\Cref{app:ac}). All intervals are 95\% bootstrap confidence intervals over papers.

\subsection{Data and definitions}
\label{sec:claims}

\paragraph{Corpus.}
We use the public OpenReview subset of the NeurIPS~2021 consistency experiment~\citep{beygelzimer2023has}: 298 papers, each reviewed by two committees that did not communicate, with 2{,}319 reviews and 596 area-chair (AC) decision notes in total. Papers have six to eleven reviewers, typically seven or eight.

\paragraph{Claims and layers.}
An LLM splits every review and AC note into atomic \emph{claims}. An atomic claim is an independent, self-contained comment about one problem or merit of the paper. An LLM then rewrites each claim as a statement that can be checked against the paper (\Cref{app:extract}). This gives 25{,}868 claims: 23{,}920 from reviewers and 1{,}948 from ACs. Each claim gets three labels: what kind of statement it is (fact, evaluation, request, quote, reference, or non-argument, following \citet{hua2019argument}), what would settle it (the paper, the literature, the world, or nothing), and whether it assumes the paper is missing something. The first two labels sort claims into three \emph{layers}: \emph{Concrete} claims are facts or requests that the paper itself can settle. \emph{Conceptual} claims are evaluations. \emph{Out-of-scope} claims can only be checked against other work or the world. For example, ``the experimental evaluation omits some important baselines such as DISCERN'' is concrete, and ``the techniques proposed in this paper look incremental'' is conceptual. Quotes, references, and non-arguments are dropped, because they do not assess the paper.

\paragraph{Matches and issues.}
An LLM labels pairs of claims in two settings: strict and loose (\Cref{app:pairs}); 38{,}038 pairs were labelled. A \emph{strict} match is an exact restatement, for example ``it would be helpful to have a separate discussion on the limitations of the model'' and ``the authors don't discuss the limitations of the model in any detail''. A \emph{loose} match also allows the same point at a different level of detail, for example ``only evaluated over one training dataset and one task'' and ``the paper would be strengthened by including other tasks, such as defect detection''.

Within each reviewer and layer, strictly matching claims are merged, so a point the reviewer repeats counts once. Between reviewers of the same paper, and between the two committees, claims are compared across layers, and a loose match counts as the same point; a concrete claim can therefore match a conceptual one. Loosely matching claims form an \emph{issue}, a problem that has been described from different angles in reviews. Loose matches can chain several claims together, so when a group links more than two claims, an LLM examines the whole group and splits it if its claims do not raise a single issue (\Cref{app:pairs}).

\subsection{What share of a paper's issues does a single reviewer raise?}
\label{sec:onerev}

For each paper, we merge the claims of all its reviewers into issues. Their union is the paper's observed issue set. We then estimate the richness of the pool those reviews are drawn from: how many distinct issues would appear if further reviewers were drawn in the same way. The estimator is bias-corrected Chao2~\citep{chao1987estimating,chao2005species}. It treats issues like species and reviewers like sampling occasions. We only record whether each reviewer raised an issue, so these are incidence data, the setting Chao2 is built for: if $f_1$ issues were raised by exactly one of the paper's $m$ reviewers and $f_2$ by exactly two, the estimated richness is $S_{\mathrm{obs}} + \frac{m-1}{m}\, f_1(f_1-1)/\bigl(2(f_2+1)\bigr)$. A large singleton count means many issues in this pool are still unseen. The total contains the issues this sampling process would surface. Points that reviewers in this pool would not raise are outside it.

\Cref{tab:onerev} gives the shares. A paper's reviewers raise 27 distinct issues between them, about five per reviewer. One reviewer covers a fifth of the observed set, and a committee of three or four covers a little over half. Each added reviewer keeps finding new issues: $n$ random reviewers cover 20, 35, 48, 60, 72, and 83\% of the set for $n=1$ to $6$, and 81\% of issues are raised by only one reviewer. The Chao2 estimate puts this pool at about 90 issues for a typical paper (the median; the mean of 124 is pulled up by a few papers). Of that pool, one reviewer raises about 6\%, a committee 18\%, and all of a paper's reviewers together 31\%.

\begin{table}[h]
\centering
\caption{Share of a paper's review issues raised by one reader. Observed issues are the union over the paper's reviewers. The estimated total is the Chao2 richness of issues that further draws from the same reviewer pool would raise. Means over 298 papers.}
\label{tab:onerev}
\small
\begin{tabular}{lrr}
\toprule
Reader & Of observed issues (\%) & Of Chao2 richness (\%) \\
\midrule
One reviewer & 19.7 [19.3, 20.2] & 6.4 [5.9, 6.9] \\
One committee (3 to 4 reviewers) & 57.3 [56.8, 57.7] & 18.1 [16.9, 19.2] \\
All of a paper's reviewers & 100 by construction & 30.9 [29.2, 32.6] \\
\bottomrule
\end{tabular}
\end{table}

\subsection{Do two committees raise the same issues?}
\label{sec:overlap}

A claim \emph{overlaps} if the other committee has a claim that loosely matches it, in any layer. \Cref{tab:overlap} gives the overlap rates. One reviewer claim in eight is restated exactly by the other committee, and one in three is raised at some level of detail. Judgements overlap more than specifics: the loose overlap is 40.4\% for conceptual claims and 30.9\% for concrete ones. AC claims overlap with the other committee's reviews at the same rate as reviewer claims (35.3\% versus 34.3\%). This is the lottery term of \eqref{eq:bias_variance} at the level of claims: two committees of qualified reviewers mostly do not raise the same points.

\begin{table}[h]
\centering
\caption{Cross-committee overlap: share of claims with a matching claim in the other committee. The denominator is all claims; a match the LLM misses counts as no overlap.}
\label{tab:overlap}
\small
\begin{tabular}{lrrr}
\toprule
Claims compared & $n$ & Strict (\%) & Loose (\%) \\
\midrule
Reviewer claims vs.\ other committee's reviews & 23{,}920 & 11.9 [11.1, 12.7] & 34.3 [32.8, 35.8] \\
AC claims vs.\ other committee's reviews & 1{,}948 & 11.7 [10.1, 13.3] & 35.3 [32.4, 38.2] \\
\bottomrule
\end{tabular}
\end{table}

\subsection{Are the criticisms right?}
\label{sec:verify}

For 101 papers, Grok~4.5 checked every checkable claim against the manuscript (\Cref{app:verify}). The check asks whether the paper itself bears a claim out, not whether the claim is important or fair. A claim is \emph{refuted} only when the checker finds evidence against it in the paper: it is asked to quote at least eight consecutive words from the paper, and we check automatically that the quote appears there, downgrading the verdict if it does not. Any claim can be refuted this way. A reviewer may say the paper omits something that it in fact contains, or may misstate what the paper reports or shows; in both cases the verdict rests on a passage of the paper. A typical refutation: a reviewer wrote ``I was surprised to see no mention of de Haan et al., Causal confusion in imitation learning'', and the paper says ``de Haan et al.\ (2019) explore the case when expert and imitator can observe the same contexts, but the causal diagram is not available.'' A claim is \emph{partial} when the paper has related material that does not fully settle it, \emph{unsettleable} when the paper's text cannot settle it, and \emph{unclear} when the checker cannot tell. Claims that only the literature or the world can settle could also be checked against web sources; 37 refutations, none of them concrete, rest on such a source.

The opposite verdict is narrower. A claim is \emph{upheld} when it says the paper lacks something and the checker searched for it under at least three terms without finding it (\texttt{criticism\_valid} in \Cref{app:verify}). A search can confirm that something is absent, but not that a table entry is wrong or that a proof step does not follow, so a correct criticism of that kind ends up among the partial or unclear verdicts. The upheld rate therefore undercounts correct criticisms, while every claim can be refuted. We report refuted claims as a share of all judged claims. This rate is conservative: a claim counts as wrong only when the checker produces a quote against it, and partial, unsettleable, and unclear verdicts are not counted as wrong.

The manuscript in this check is the camera-ready PDF linked from the paper's public OpenReview forum, not the anonymized submission the reviewers read. NeurIPS 2021 did not accept a revised PDF during review, but authors of an accepted paper could revise the manuscript for the camera-ready deadline~\citep{neurips2021faq}. Our corpus is the public subset of accepted duplicated papers~\citep{beygelzimer2023has}, and the submitted PDFs are not in that release. A criticism that the submission omitted a citation, baseline, or experiment can therefore be marked refuted because the authors added the missing material afterward. Of the 1{,}164 refuted claims, 89.1\% assume the paper is missing something, the kind of claim most exposed to this. The refuted rates below can be higher than the rate at which the submission itself contradicted the reviewer. This works against the conservative counting rule above, and without the submitted PDFs we cannot tell which effect is larger. The de Haan sentence above is quoted from the camera-ready PDF, and we cannot tell whether it was already in the submission.

\Cref{tab:verify} gives the verdicts by layer. Concrete claims are the ones the checker can really decide, and the paper refutes 18.2\% of all judged concrete claims. Counting only clear verdicts (upheld or refuted), reviewers and ACs are refuted at the same rate: 34.2\% and 34.3\% across all layers. Agreement does not help: claims that the other committee also raised are upheld and refuted at the same rates as claims it did not raise, within 0.8 points and with overlapping intervals.

To check the checker, we took 25 sentences from a paper $A$, turned each into a criticism that $A$ omits it, and checked each criticism against $A$, where it should be refuted, and against an unrelated paper $B$, where it should be upheld. The checker matched the answer key on 48 of 50 verdicts (96\%, Wilson 95\% CI [86.5, 98.9]).

\begin{table}[h]
\centering
\caption{Manuscript check of criticisms (101 papers). Upheld and refuted are shares of the judged claims in each layer; the rest are partial, unsettleable, or unclear.}
\label{tab:verify}
\small
\begin{tabular}{lrrr}
\toprule
Layer & $n$ & Upheld (\%) & Refuted (\%) \\
\midrule
Concrete & 5{,}527 & 34.1 [31.5, 36.8] & 18.2 [16.3, 20.2] \\
Conceptual & 1{,}776 & 13.9 [11.4, 16.9] & 6.8 [5.3, 8.4] \\
Out of scope & 444 & 24.3 [19.2, 29.6] & 8.8 [5.5, 12.2] \\
All judged & 7{,}747 & 28.9 [26.6, 31.4] & 15.0 [13.5, 16.6] \\
\bottomrule
\end{tabular}
\end{table}

\subsection{What does the area chair keep?}
\label{app:ac}

Each AC read only its own committee's reviews and then wrote the decision note, so we can see what survives from the reviews to the note. We look in both directions. Forward, an AC claim is \emph{traced} if a review the AC read contains the same issue. Backward, a reviewer issue is \emph{kept} if the AC note contains it. Among the 482 committees whose note contains at least one claim, a note has four claims on average and the reviewers raise 18 issues.

\paragraph{Where AC claims come from.}
\Cref{tab:acsrc} gives the forward direction. Just over half of AC claims can be traced to a review (54.6\% loose, 22.9\% strict). A traced claim usually restates a single reviewer: 59\% match one reviewer, 29\% two, and 12\% three or more. Among the AC claims the manuscript check could judge, 16\% of traced claims are refuted, against 23\% of the AC's own additions, with overlapping intervals.

\paragraph{What reaches the note.}
Whether an issue reaches the note depends on how many reviewers raised it: 5.7\% [5.1, 6.4] of the 7{,}607 issues raised by one reviewer reach the note, 27.1\% [24.0, 30.6] of the 783 raised by two, and 56.5\% [51.2, 61.7] of the 416 raised by three or more. Whether the issue is correct makes no difference. Among single-reviewer issues, upheld, partial, and refuted issues are each kept about 5\% of the time; among multi-reviewer issues, 30 to 41\% each, with overlapping intervals. The AC filters by consensus, not by truth.

\begin{table}[h]
\centering
\caption{Where AC claims come from: share of AC claims with a matching claim in the reviews of the AC's own committee.}
\label{tab:acsrc}
\small
\begin{tabular}{lrrr}
\toprule
AC claims & $n$ & Traced, strict (\%) & Traced, loose (\%) \\
\midrule
All & 1{,}948 & 22.9 [20.3, 25.5] & 54.6 [51.1, 57.7] \\
Concrete & 1{,}108 & 21.9 & 54.9 [50.4, 59.1] \\
Conceptual & 550 & 31.8 & 58.2 [53.9, 62.8] \\
Out of scope & 290 & 9.7 & 46.6 [40.4, 53.8] \\
\bottomrule
\end{tabular}
\end{table}

\section{Pipeline details}
\label{app:pipeline}

This appendix gives the prompts and settings behind the pipeline in \Cref{sec:method} and the scoring study in \Cref{sec:eval}; further results of the scoring study are in \Cref{app:results}. All LLM steps use Grok~4.5 through Cursor, except the simulated reviewers in \Cref{sec:eval}, which use Composer~2.5. All random seeds are 42.

\subsection{Claim extraction}
\label{app:extract}

Extraction has three passes. The first cuts a review into spans, the second rewrites each span as a statement about the paper that can be checked, and the third labels each span.

\paragraph{Pass 1: segmentation.}
We split each review field at line breaks and give the model one block at a time. The model must copy each span word for word, and any span that does not appear in the block is thrown away.

\begin{promptbox}
\begin{verbatim}
Split this block of a peer review into propositions. Do not judge them.

A proposition is the smallest span that makes one point about one topic.

- Copy each span verbatim. Do not fix typos or paraphrase.
- Keep spans in reading order, without overlap.
- Cover the whole block except greetings and section headers.
- Do not filter or cap the number of spans. Keep praise, questions, and
  typo reports too.
- Return an empty list if the block has no propositions.
\end{verbatim}
\end{promptbox}

\paragraph{Pass 2: propositionization.}
The second pass rewrites each span as a statement that a reader could mark true or false against the paper, which makes claims easier to understand and to verify. For example, ``The authors should add an ablation of the KD loss'' becomes ``the paper reports an ablation of the knowledge-distillation loss''. A pure judgement such as ``the discussion is too vague'' has no such statement and becomes null.

\begin{promptbox}
\begin{verbatim}
Rewrite each proposition as one statement about the paper that a reader could
mark true or false, or null. Point it at the paper, not at the authors.

  "The authors should add an ablation of the KD loss"
  -> "The paper reports an ablation of the knowledge-distillation loss."
  "How did you choose these learning rates?"
  -> "The paper explains how the learning rates were chosen."
  "The discussion is too vague"
  -> null

Almost every request or question has a checkable core ("the paper contains /
reports / explains X"). Use null only for a pure judgement.
\end{verbatim}
\end{promptbox}

\paragraph{Pass 3: typing.}
The third pass gives each span three labels: what kind of statement it is (a fact, an evaluation, a request, or a quote, reference, or non-argument), what would settle it (the paper, other literature, the world, or nothing), and whether it assumes the paper is missing something. The first two labels decide the layer used in \Cref{sec:claims}. Evaluations are conceptual. Facts and requests that the paper can settle are concrete. Facts and requests that only other literature or the world can settle are out of scope. Quotes, references, and non-arguments are dropped.

\begin{promptbox}
\begin{verbatim}
proposition_type: fact | evaluation | request | quote | reference | non_arg.
  fact: a checkable statement about the paper or the world.
  evaluation: a judgement ("not novel enough", "reads well").
  request: asks for an action or asks a question.
  If a span is both fact and evaluation, choose fact.

presupposes_absence: true if the span assumes the paper lacks something.
  "Please also evaluate with AutoAttack"      -> true
  "Why is SAM a second-order method?"         -> false
  "Table 4 shows accuracy below the baseline" -> false

truthmaker: what settles it. paper | literature | world | none.
  "The paper does not cite Gulcehre et al. 2021" is "paper": the citation
  list is in the paper.

Label what the words say, not what the reviewer might have meant.
\end{verbatim}
\end{promptbox}

\subsection{Same-issue pair labelling}
\label{app:pairs}

{\sloppy Labels are \texttt{exactly\_the\_same}, \texttt{a\_more\_specific}, \texttt{b\_more\_specific}, or \texttt{partially\_related}. Loose overlap is any of the first three. Pairs the model does not write are unrelated.\par}

The claim-level analysis in \Cref{app:error} uses these labels in two ways. Within each reviewer and layer, only \texttt{exactly\_the\_same} pairs are merged. Between reviewers, and between the two committees, any loose overlap counts as the same point, including across layers.

\begin{promptbox}
\begin{verbatim}
Two jobs. Write only pairs that overlap. Judge each pair on its own.

Labels: exactly_the_same | a_more_specific | b_more_specific |
        partially_related.
Loose overlap is any of the first three.

1. Within one reviewer and one layer, judge that reviewer's pairs.
   Merge only exactly_the_same.
2. Between reviewers, and between the two committees, judge every pair
   on the paper, including across layers. Loose overlap is the same point.

Sharing a topic is not enough. Different defects of the same kind (both
about writing) are at most partially_related.
\end{verbatim}
\end{promptbox}

When loose matches link more than two claims from different reviewers into one group, an LLM judges the whole group with the prompt below and splits it when its claims do not raise a single issue.

\begin{promptbox}
\begin{verbatim}
You see a group of review claims about one paper. Loose matches between
pairs of claims linked them into this group. Decide whether every claim
in the group raises the same issue about the paper.

A claim may state the issue broadly or in more detail. That is fine.
Sharing a topic is not enough: two different defects of the same kind
(both about baselines, or both about writing) are different issues.

Output:
- same_issue: true if all claims raise one issue, false otherwise.
- If false, split the claims into subgroups that each raise one issue.
  Every claim goes into exactly one subgroup. If a broad claim fits more
  than one subgroup, put it with the subgroup it fits best.
\end{verbatim}
\end{promptbox}

\subsection{Manuscript check}
\label{app:verify}

The checker receives three things: the reviewer's original sentence, its Pass~2 statement, and the full manuscript. The manuscript is the camera-ready PDF on the paper's OpenReview forum. \Cref{sec:verify} explains why checking that version can inflate the refuted rate. It may call a criticism refuted only if it quotes at least eight consecutive words from the paper as evidence. We check the quote automatically and downgrade the verdict if the quote is not in the paper. It may call a criticism valid only if it lists at least three terms it searched for.

\begin{promptbox}
\begin{verbatim}
Judge the criticism. Use the statement to know what to look for: "you never
explain how T is chosen" becomes "The paper explains how T is chosen".

Verdicts:
- criticism_refuted: the paper contains what the reviewer said was missing,
  or contradicts the reviewer.
- criticism_partial: the paper has something related but not a full answer.
- criticism_valid: you searched and the paper does not contain it.
- out_of_scope: the paper's text cannot settle this.
- unclear: you could not tell.

Rules:
1. refuted requires a verbatim quote of at least 8 words from the paper.
   The quote is checked automatically.
2. valid requires at least three search terms (synonyms, abbreviations,
   section names).
3. When unsure, say unclear. Never guess refuted.
4. A paper that explains something badly still explains it: partial,
   not refuted.
5. Ignore whether the criticism is important or fair. Only whether the paper
   bears it out.
\end{verbatim}
\end{promptbox}

The checker also records whether the span is a criticism, a description of the paper, or a question. A description that turns out to be wrong is not counted as a wrong criticism. \texttt{criticism\_valid} only records that a claimed absence is real. A positive error that the paper does contain, such as a wrong table entry or a proof step that does not follow, receives no upheld label (\Cref{sec:verify}).

\subsection{Building the hierarchy}
\label{app:cluster}

\paragraph{Fingerprint stripping.}
From here on, each claim's text is its Pass~2 statement, or the original span when Pass~2 returned null. Reviewers of different papers describe the same issue with different line numbers and method names, so before embedding we replace these with placeholders: numbered references become \texttt{[LINE]}, \texttt{[SEC]}, \texttt{[FIG]}, \texttt{[TAB]}, \texttt{[EQ]}, \texttt{[ALG]}, or \texttt{[PAGE]}, and rare method names become \texttt{[METHOD]}.

\paragraph{Embedding.}
We embed each text with Qwen3-Embedding-0.6B using the instruction below, then reduce it to 128 dimensions with a random projection and normalize it.

\begin{promptbox}
\begin{verbatim}
Instruct: Represent the underlying issue raised by this scientific peer-review
claim. Claims about the same issue should be close even across different papers.
Preserve distinctions between a broad issue and a more specific version of it,
and ignore paper-specific names, numbering, and reviewer phrasing.
Query:
\end{verbatim}
\end{promptbox}

\paragraph{Tree.}
Claims from the papers used in \Cref{sec:eval} are held out from the start of the pipeline, so those papers do not enter the tree. This leaves 15{,}118 concrete claims. Step~(vi) of \Cref{sec:method} pools all of them and merges claims that an LLM judges to be exactly the same. Candidate pairs link each claim to its three nearest neighbours by cosine similarity, computed on the full Qwen3 embeddings before the 128-dimensional projection, and the LLM judges each candidate pair with the prompt below. This step is separate from the claim-level analysis in \Cref{app:error}. Each connected component of merged pairs becomes one point, whose embedding is the mean of its members' embeddings. Step~(vii) grows a Ward tree on the embeddings of these points.

\begin{promptbox}
\begin{verbatim}
You see pairs of concrete review claims. The two claims in a pair can come
from different papers. Paper-specific numbers and method names are already
replaced by placeholders such as [METHOD] and [TAB].

For each pair, decide whether the two claims are exactly the same.
- exactly_the_same: the same issue at the same level of detail, so either
  claim could stand in for the other as a checklist item.
- not_same: anything else, including when one claim is broader than the
  other, or when both raise different defects of the same kind (both
  about baselines, or both about writing).

Judge each pair on its own. Sharing a topic is not enough.
\end{verbatim}
\end{promptbox}

\paragraph{Cuts.}
Every checklist ($k \in \{50, 100, 200, 500, 600, 1000, 2000, 6000\}$) is a cut of this one tree, so each coarse item is the union of the finer items under it.

\paragraph{Naming.}
At each cut, an LLM writes a short question for each cluster, given the cluster's size, the number of papers it draws on, its parent, and a sample of its claims. These questions are the checklist items.

\subsection{Recall against verified issues}
\label{app:recall}

The recall column in \Cref{tab:excess} asks whether an LLM given the checklist at one cut finds the issues that human reviewers found. The targets are 32 human criticisms, two from each of 16 papers, that the manuscript check upheld. Each cut is a separate run. The LLM receives the paper and that cut's checklist, the $k$ questions, and judges each question on its own. A target is found when the LLM flags a gap on the item that contains it. Recall at cut $k$ is the share of the 32 targets found.

\begin{promptbox}
\begin{verbatim}
Judge every question on this checklist on its own.

For each question, output gap_present:
- true: the paper is missing, weak, wrong, or unclear on this issue.
- false: the paper addresses it, or it does not apply.
Partial coverage is a gap if a substantive part is missing.

Do not read the reviews. Do not use the examples as evidence about this paper.
\end{verbatim}
\end{promptbox}

\subsection{The scoring study}
\label{app:eval}

Each simulated reviewer works in two calls: it first writes its criticisms, then reads them back and gives a score. Splitting the two keeps the model from picking a score first and writing a review to match.

\paragraph{Without checklist.}
Each reviewer is assigned one branch of the $k{=}50$ cut and ten random $k{=}6{,}000$ items from that branch, and may add up to two issues of its own within the branch. Because the branches do not overlap, neither do the items that different reviewers receive, so each reviewer has its own main focus. The simulation models the first round of review, so branches about rebuttal and revision are skipped (one on NeurIPS, three on ICLR), which leaves one reviewer for each of the $F{=}49$ remaining branches on NeurIPS and $F{=}47$ on ICLR. The average score over all $F$ reviewers is the population target $\hat\mu_i(\varnothing)$, and \Cref{app:gt} shows how closely committees of different sizes approach it.

\begin{promptbox}
\begin{verbatim}
You are a reviewer assigned one k=50 field and 10 random k=6000 claims under it.
Judge only those claims. You may add up to 2 more concrete issues from the
same field.

- Read the whole paper. Stay inside your field.
- Do not read other reviewers or any labels.
- Do not turn a claim into a generic question ("is the paper novel?"). If a
  claim names a method or dataset this paper never uses, answer na.
- verdict: gap (missing, weak, or wrong), ok (handled), or na (does not apply).
- Quote at least one full sentence from the paper for every gap or ok.
\end{verbatim}
\end{promptbox}

\paragraph{With checklist.}
Eight reviewers per paper receive the same checklist, each in a different random order, and may open the finer items under any item. The prompt below is the $k{=}100$ version; the other granularities use the same prompt with their own items.

\begin{promptbox}
\begin{verbatim}
Read the whole paper and the k=100 checklist. Use the finer items under each
item as a guide to what it means.

1. Go through every item. Keep only the ones plausible for this paper.
2. The finer items and other-paper examples explain the item; they are not
   evidence about this paper.
3. Search the whole paper before calling an item missing.
4. Write at most eight concrete criticisms, fewer if fewer are well supported:
   omissions, inconsistencies, unsupported claims, or inadequate experiments
   that the paper text can settle.
5. Quote at least one full sentence from the paper for each criticism.
   Items you skipped are not defects.
\end{verbatim}
\end{promptbox}

\paragraph{Estimand.}
For each paper and condition, we compute variance, bias, and MSE exactly from the observed scores (8 reviewers per checklist condition and $F$ unaided reviewers) and average them over the 30 papers of each venue with equal weight, squaring each paper's bias before averaging.

\paragraph{Scoring.}
Both conditions use the same scoring prompt, apart from the phrase that names the condition. The checklist version:

\begin{promptbox}
\begin{verbatim}
You already wrote criticisms for this paper. Now give your NeurIPS score (1-10).
Do not write a new review.

- Read the paper and your own review. Do not read other reviewers.
- Score the paper as you would after reading it with that focus. Gaps you
  flagged pull the score down; items you skipped are not defects.
- Give a score only, not an accept/reject decision.

Scale: 10 top 5 percent; 8 accept; 6 marginally above; 5 marginally below;
3 clear reject; 1 trivial or wrong. 7 and 9 are allowed.
\end{verbatim}
\end{promptbox}

\subsection{Human study papers}
\label{app:human_study}

Both sets in \Cref{sec:human} come from NeurIPS~2021. Within each set, papers are listed in order of their NeurIPS mean rating. The pruning papers are GraNet, ``Sparse Training via Boosting Pruning Plasticity with Neuroregeneration'' (7.0); CHIP, ``CHIP: CHannel Independence-based Pruning for Compact Neural Networks'' (6.0); and ``Efficient Neural Network Training via Forward and Backward Propagation Sparsification'' (5.4). The adversarial robustness papers are NuAT, ``Towards Efficient and Effective Adversarial Training'' (7.0); ``Data Augmentation Can Improve Robustness'' (6.0); and ``Shift Invariance Can Reduce Adversarial Robustness'' (5.25). Like the simulation papers, all six are held out when the checklist tree is grown (\Cref{sec:method}).

\section{Additional results of the scoring study}
\label{app:results}

This appendix gives further and more detailed results of the scoring study in \Cref{sec:eval}: how stable the population target is (\Cref{app:gt}), the ICLR counterpart of \Cref{fig:sim} (\Cref{app:iclr}), and how the spread of simulated scores compares with that of human scores (\Cref{app:humans}). The prompts and settings of the study are in \Cref{app:eval}.

\subsection{Stability of the population target}
\label{app:gt}

The population target $\hat\mu_i(\varnothing)$ is the average score over all $F$ unaided simulated reviewers of a paper (\Cref{app:eval}). \Cref{fig:gt} shows how closely committees of different sizes approach it.

\begin{figure*}[t]
\centering
\includegraphics[width=\textwidth]{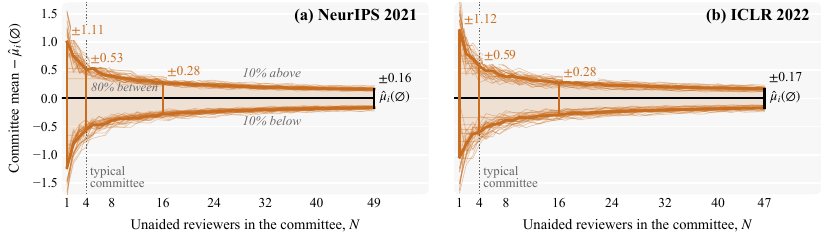}
\caption{Variance in the average review score as more reviewers are added. For each paper, we draw committees of $N$ unaided reviewers at random (with replacement) from its $F$ simulated reviewers ($F{=}49$ for NeurIPS, $47$ for ICLR), average their scores, and subtract the paper's target $\hat\mu_i(\varnothing)$, the average over all $F$. Zero means the committee matches the target. Each paper has two thin lines: a randomly drawn committee of size $N$ lands above the upper line 10\% of the time, below the lower line 10\% of the time, and between them the remaining 80\%. The thick lines and shading show the typical paper, taking the median of these lines over the 30 papers. A single reviewer is typically within about $\pm1.1$ points of the target, a committee of four within about $\pm0.55$, and all $F$ reviewers within about $\pm0.17$, which is the remaining uncertainty of $\hat\mu_i(\varnothing)$ itself.}
\label{fig:gt}
\end{figure*}

\subsection{ICLR results}
\label{app:iclr}

\Cref{fig:sim_iclr} repeats \Cref{fig:sim} for ICLR~2022. As on NeurIPS, the $k{=}100$ checklist cuts single-reviewer variance from 0.84 to 0.18 and has a small mean bias ($-0.06$). Its per-paper biases spread more widely than on NeurIPS, so its mean squared bias is 0.16 against 0.07 there, and the no-checklist committee catches up sooner, at $N{=}5$ rather than $N{=}10$. At $N{=}4$ the checklist is still marginally ahead (0.205 against 0.209).

\begin{figure}[t]
\centering
\includegraphics[width=\textwidth]{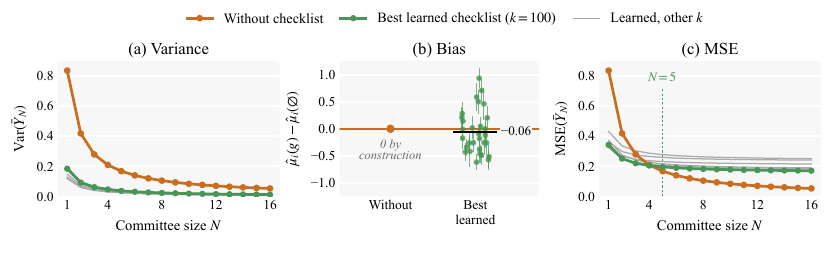}
\caption{Variance, bias, and MSE of the committee mean on ICLR~2022 (30 papers), drawn as in \Cref{fig:sim}. ICLR has no official checklist, so only learned checklists appear. (b)~Each point is one paper's bias, the checklist mean minus the target $\hat\mu_i(\varnothing)$ ($\pm1$ SE); the black bar averages over papers. (c)~The dotted line marks the smallest $N$ at which the no-checklist committee is at least as accurate as the $k{=}100$ checklist. Grey lines are learned checklists at other granularities, $k\in\{50,200,500,1000,2000\}$.}
\label{fig:sim_iclr}
\end{figure}

\subsection{Simulated and human reviewer variance}
\label{app:humans}

The unaided simulated reviewers stand in for the reviewer lottery, so their scores for a paper should spread about as much as human scores for that paper do. \Cref{tab:sim_human} and \Cref{fig:sim_human} compare the within-paper spread of single-reviewer scores on the 30 papers of each venue. Humans number three to nine per paper, so every variance here uses the unbiased estimator, with $n-1$ in the denominator.

\begin{table}[h]
\centering
\caption{Within-paper spread of single-reviewer scores on the 30 papers of each venue, for simulated unaided reviewers and for the human reviewers of the same papers. $n$ is the mean number of reviewers per paper, or per committee in the within-committee row. The last column is the share of same-paper reviewer pairs whose scores differ by at least two points. Brackets are paper-level bootstrap 95\% CIs.}
\label{tab:sim_human}
\small
\begin{tabular}{llrll}
\toprule
Venue & Reviewers & $n$ & Variance & Pairs $\ge2$ apart (\%) \\
\midrule
NeurIPS 2021 & Simulated, without checklist & 49 & 0.84 [0.77, 0.92] & 23.2 [21.0, 25.4] \\
& Human, within one committee & 3.9 & 0.93 [0.63, 1.33] & 20.7 [14.6, 27.2] \\
& Human, both committees pooled & 7.7 & 1.50 [1.18, 1.84] & 37.4 [30.9, 43.9] \\
\midrule
ICLR 2022 & Simulated, without checklist & 47 & 0.86 [0.76, 0.95] & 23.7 [20.9, 26.6] \\
& Human, one committee & 3.8 & 1.44 [0.74, 2.28] & 32.6 [21.2, 43.3] \\
\bottomrule
\end{tabular}
\end{table}

\begin{figure}[ht]
\centering
\includegraphics[width=\textwidth]{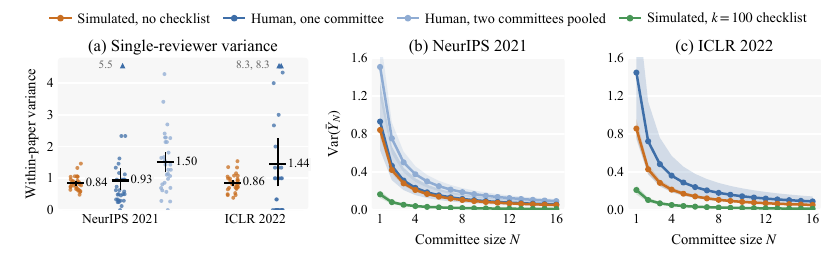}
\caption{Score spread of simulated and human reviewers on the 30 papers of each venue. (a)~Each point is one paper's within-paper variance of single-reviewer scores (unbiased estimator). Black bars are means with paper-level bootstrap 95\% CIs; triangles mark papers above the axis, labelled with their values. For NeurIPS, the one-committee value averages the variances of the paper's two committees. (b,~c)~Variance of the committee mean, the mean variances in (a) divided by $N$, drawn as in \Cref{fig:sim}a with human committees added and the $k{=}100$ checklist for scale. Shading is the 95\% CI; the ICLR human band reaches 2.28 at $N{=}1$, above the axis.}
\label{fig:sim_human}
\end{figure}

On NeurIPS, humans and simulated reviewers use the same 1-to-10 scale. The simulated variance, 0.84, is close to the variance among the reviewers of one human committee, 0.93, and so is the share of reviewer pairs two or more points apart (23\% and 21\%). Pooling the two committees raises the human variance to 1.50, because reviewers agree more with their own committee than with the other one. The same gap appears over all 298 NeurIPS pairs (1.07 within a committee, 1.33 pooled). The simulation draws reviewers independently and has no committees, so it matches the spread within a human committee but not the extra spread between committees. In \Cref{fig:sim_human}b, the simulated curve stays inside the one-committee band at every $N$, and the gap between the two is small next to the drop to the $k{=}100$ checklist curve.

On ICLR, each paper has one committee of about four reviewers, who score on the ICLR~2022 scale of 1, 3, 5, 6, 8, and 10. On that scale any two different scores other than 5 and 6 are at least two points apart, while the simulated reviewers use the 1-to-10 scale of the scoring prompt in \Cref{app:eval}. As a result, 62\% of human pairs give the same score and only 5\% differ by one point, against 32\% and 44\% of simulated pairs. The human variance, 1.44, is higher than the simulated 0.86, but with four reviewers per paper its interval is wide, and the ratio of the two, 1.69 [0.87, 2.70], includes one. In \Cref{fig:sim_human}c, the simulated curve stays inside the human band at every $N$.

The simulation does not reproduce which papers split their reviewers: per-paper human and simulated variances are uncorrelated (Spearman $-0.27$ on NeurIPS, $0.05$ on ICLR). Human committees do not agree on this either. Over all 298 NeurIPS pairs, the per-paper variances of the two committees correlate at $0.09$. Paper means agree more closely: across the 30 papers, which run from strong to weak, the simulated target $\hat\mu_i(\varnothing)$ and the mean human rating have Spearman correlation $0.79$ on NeurIPS and $0.77$ on ICLR.

For reference only, we also score the human ratings against $\hat\mu_i(\varnothing)$ as in \Cref{fig:sim}; these results are not reported in the main text. A single human rating has MSE 2.123 on NeurIPS and 2.874 on ICLR, and human ratings sit on average 0.38 points below $\hat\mu_i(\varnothing)$ on NeurIPS and 0.33 points above it on ICLR. Because $\hat\mu_i(\varnothing)$ is defined by LLM reviewers, this gap says nothing about human bias.

\end{document}